\definecolor{modelica-model}{HTML}{1E7A19}
\definecolor{modelica-bool}{HTML}{0F7877}
\definecolor{modelica-eq-ite}{HTML}{7F007E}
\definecolor{modelica-der}{HTML}{0000FC}
\definecolor{darkred}{rgb}{0.6,0,0}
\definecolor{darkorange}{rgb}{0.7,0.2,0}
\definecolor{darkgreen}{rgb}{0,0.3,0}
\definecolor{lightblue}{rgb}{0.3,0.3,1}
\definecolor{mygreen}{rgb}{0.0, 0.42, 0.24}
\definecolor{myblue}{rgb}{0.16, 0.32, 0.75}
\definecolor{mychoc}{rgb}{0.48, 0.25, 0.0}
\newcommand{\sigmamethod}{$\Sigma$-method}
\newcommand{\proj}[2]{\mathbf{proj}_{#1}\!\left(#2\right)}
\newcommand{\rref}[2][]{\prettyref{#2}}
\newtheorem{theorem}{Theorem}
\newtheorem{lemma}[theorem]{Lemma}
\newtheorem{notation}[theorem]{Notations}
\newtheorem{problem}{Problem}
\newtheorem{question}{Question}
\newtheorem{tool}{Basic tool}
\newtheorem{definition}[theorem]{Definition}
\newtheorem{corollary}[theorem]{Corollary}
\newtheorem{ccomment}[theorem]{Comment}
\newtheorem{principle}{Principle}
\newenvironment{proof}{\paragraph{\it Proof}}{\eproof \\

}
\def\Jacobian{\mathbf{J}}
\def\bfc{\mathbf{c}}
\def\bfd{\mathbf{d}}
\def\bff{\mathbf{f}}
\def\bfB{\mathbf{B}}
\def\path{\pi}
\def\DM{{\rm DM}}
\def\IsamDAE{IsamDAE}
\newcommand{\prog}[1]{\mathtt{#1}}
\newcommand{\remph}[1]{\mbox{{\color{red}$#1$}}}
\newcommand{\bemph}[1]{\mbox{{\color{blue}$#1$}}}
\newcommand{\blemph}[1]{\mbox{{\color{black}$#1$}}}
\newcommand{\consistency}[1]{\overline{#1}}
\renewcommand{\dot}[1]{#1'}
\newcommand{\myparagraph}[1]{\smallskip\noindent\textit{{#1:}}}
\newcommand{\bea}{\begin{array}}
\newcommand{\eea}{\end{array}}
\newcommand{\beq}{\begin{eqnarray}}
\newcommand{\eeq}{\end{eqnarray}}
\newcommand{\beqq}{\begin{eqnarray*}}
\newcommand{\eeqq}{\end{eqnarray*}}
\newcommand{\Ra}{\Rightarrow}
\newcommand{\ra}{\rightarrow}
\newcommand{\la}{\leftarrow}
\newcommand{\cJ}{\mathcal{J}}
\newcommand{\val}{\nu}
\newcommand{\vval}[1]{\val_{#1}}
\newcommand{\cC}{\mathcal{C}}
\newcommand{\cG}{\mathcal{G}}
\newcommand{\Edges}[1]{E_{#1}}
\newcommand{\eqq}{e}
\newcommand{\block}{\beta}
\newcommand{\preset}[1]{{^{\bullet\!}{#1}}}
\newcommand{\postset}[1]{#1^\bullet}
\newcommand{\ppostset}[2]{#2^{\bullet#1}}
\newcommand{\pprime}[2]{#2^{\prime#1}}
\newcommand{\eproof}{\hfill$\Box$}
\newcommand{\vsmall}{\partial}
\newcommand{\regularU}[2]{U_{\!#1\!}\left(#2\right)}
\newcommand{\regularV}[2]{V_{\!#1\!}\left(#2\right)}
\newcommand{\eqdef}{=_{{\rm def}}}
\newcommand{\guard}{\gamma}
\newcommand{\squared}{\mathfrak{E}}
\newcommand{\underapprox}{\mathfrak{U}}
\newcommand{\overapprox}{\mathfrak{O}}
\newcommand{\vvec}[1]{\mathcal{#1}}
\newcommand{\cA}{{\cal A}}
\newcommand{\cM}{{\cal M}}
\newcommand{\cS}{{\cal S}}
\newcommand{\cP}{{\cal P}}
\newcommand{\cH}{{\cal H}}
\newcommand{\when}{\prog{if}}
\newcommand{\doo}{\prog{then}}
\newcommand{\mdAE}{mdAE}
\newcommand{\mDAE}{mDAE}
\newcommand{\bR}{\mathbb{R}}
\newcommand{\bT}{\mathbb{T}}
\newcommand{\bfx}{\mathbf{x}}
\newcommand{\bN}{\mathbb{N}}
\newcommand{\ttt}{\mbox{\sc t}}
\newcommand{\fff}{\mbox{\sc f}}
\newcommand{\aand}{\mathtt{and}}
\newcommand{\atomicact}[1]{\mathsf{#1}}
\newcommand{\REMOVE}[1]{}
\begin{document}
\RRNo{9387}
\hypersetup{pageanchor=false}
\begin{titlepage}
\makeRR
\end{titlepage}
\hypersetup{pageanchor=true}
\pagenumbering{arabic}
\clearpage
\tableofcontents
\clearpage
\section{{Introduction}}\label{sec:intro}
Modern modeling languages for general physical systems, such as Modelica, Amesim, or Simscape, rely on Differential Algebraic Equations (DAEs), i.e., constraints of the form $f(\dot{x},x,u)=0$. This drastically facilitates modeling from first principles of the physics, as well as model reuse. In recent works~\cite{BENVENISTE2020,DBLP:journals/corr/abs-2008-05166}, we presented the mathematical theory needed to establish the development of compilers and tools for DAE-based physical modeling languages on solid mathematical grounds.

At the core of this analysis sits the so-called \emph{structural analysis}, whose purpose, at compile time, is to either identify under- and overspecified subsystems (if any), or to rewrite the model in a form amenable of existing DAE solvers, including the handling of mode change events. The notion of ``structure'' collects, for each mode and mode change event, the variables and equations involved, as well as the \emph{latent equations} (additional equations redundant with the system), needed to prepare the code submitted to the solver. The notion of DAE \emph{index} (the minimal number of times any equation has to be possibly differentiated) is part of this structural analysis. The body of knowledge on structural analysis is large and scattered, which motivated us to collect it in a single report.  
 This report complements~\cite{BENVENISTE2020,DBLP:journals/corr/abs-2008-05166} by collecting all the needed background on structural analysis. The uses of these tools and methods in~\cite{BENVENISTE2020,DBLP:journals/corr/abs-2008-05166} are highlighted in this report.

\myparagraph{Structural analysis of systems of equations}
We first explain the primary meaning of the structural analysis of systems of equations, namely the study of their regularity or singularity in some generic sense.
Structural information about a system of equations can be summed up by its incidence matrix; in particular, a square system yields a square matrix.
If, after some permutation of row and columns, this matrix only has non-zero diagonal entries, then we say that this matrix, as well as the underlying system of equations, are structurally regular: this notion is repeatedly used in the Modelica community.

However, the most commonly used structural representation of a system of equations is actually the bipartite graph characterizing the pattern of the incidence matrix (i.e., the $(i,j)$-locations of its non-zero entries).
Structural regularity or singularity, as well as derived notions, are then checked on this bipartite graph.
As a matter of fact, the systems of equations under study are often sparse (i.e., their incidence matrices have a small proportion of non-zero entries), and the association of graphs with sparse matrices is pervasive in linear algebra for high performance computing~\cite{Hogb06,hogben14}. 
We briefly review the body of graph theory used in this context. We recall in particular the Dulmage-Mendelsohn decomposition of a bipartite graph, which is the basis for structuring a general (non-square) matrix into its overdetermined, regular, and underdetermined blocks, structurally. This body of knowledge in linear algebra is referred to as \emph{structural analysis} of matrices. Structural analysis extends to systems of numerical equations, by considering the Jacobian of the considered system.

The link between numerical regularity and structural regularity belongs to the folklore of linear algebra; however, its understanding is critical for a correct use of structural analysis for systems of numerical equations. We were, however, unable to find a reference where related results were detailed; we propose such results here.

We also develop some useful extensions, for which we are not aware of any reference, particularly the structural analysis of systems of equations with existential quantifiers. 

\myparagraph{Structural analysis of DAE systems}
We focus on DAE and summarize John Pryce's \sigmamethod\ for structural analysis~\cite{Pryce01}, and we present an extension of this method to non-square DAE systems. We also include the structural form of Campbell and Gear's method of \emph{differential arrays}~\cite{CampbellGear1995}, which we complement with its discrete-time counterpart, namely ``difference arrays''. These methods rely on the structural analysis of systems of equations with existential quantifiers.


\section{Use of structural analysis on a toy example}
This section is a summary of Sections~1.2.1 and~2 of~\cite{DBLP:journals/corr/abs-2008-05166}, to which the reader is referred for further details.

\subsection{An ideal clutch}
Our example is an idealized clutch involving two rotating shafts with no motor or brake connected (\rref{fig:clutch}). We assume this system to be closed, with no interaction other than explicitly specified.
\begin{figure}[h]
  \centering
    \includegraphics[height=2.5cm]{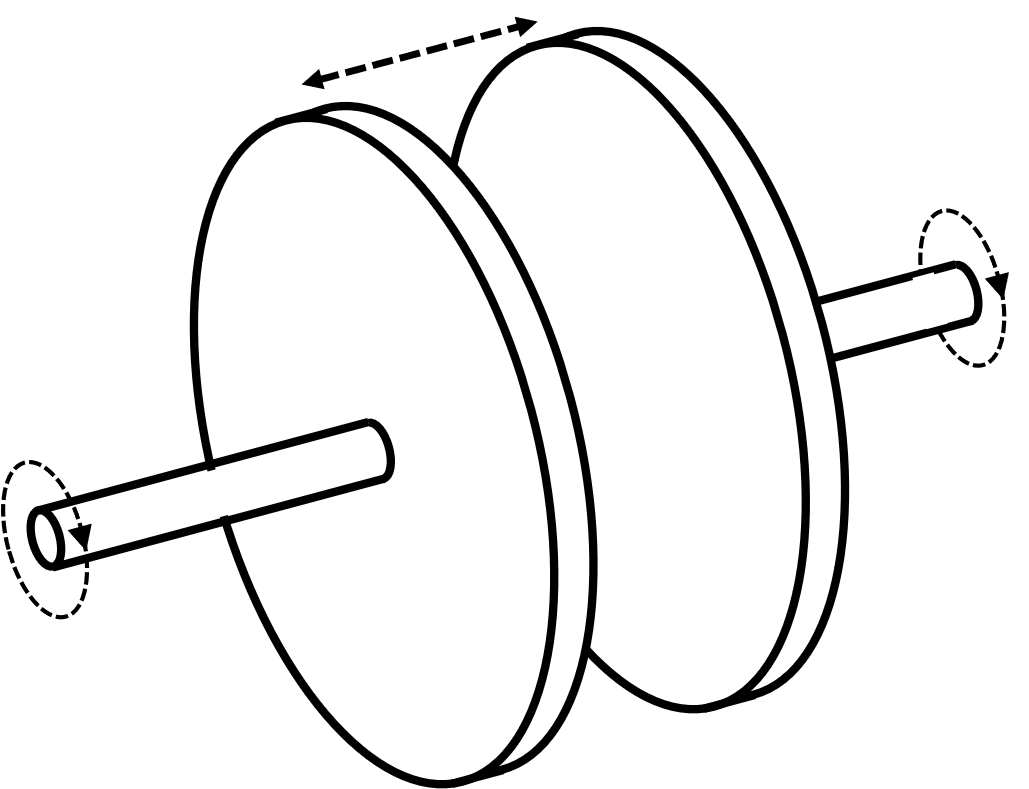}
  \caption{An ideal clutch with two shafts.}
 \label{fig:clutch}
\end{figure}

The dynamics of each shaft $i=1,2$, is modeled by \mbox{$\omega_i'=f_i(\omega_i,\tau_i)$} for some, yet unspecified, function $f_i$, where $\omega_i$ is the angular velocity, $\tau_i$ is the torque applied to shaft $i$, and $\omega_i'$ denotes the time derivative of $\omega_i$. Depending on the value of the input Boolean variable $\gamma$, the clutch is either engaged ($\gamma = \ttt$, the constant ``true'') or released ($\gamma = \fff$, the constant ``false''). When the clutch is released, the two shafts rotate freely: no torque is applied to them ($\tau_i=0$).  When the clutch is engaged, it ensures a perfect join between the two shafts, forcing them to have the same angular velocity ($\omega_1-\omega_2=0$) and opposite torques ($\tau_1+\tau_2=0$). Here is the model:
\begin{equation}
\left\{
\bea{rllcc}
&&\omega'_1=f_1(\omega_1,\tau_1) &&(\eqq_1) \\
&&\omega'_2=f_2(\omega_2,\tau_2) &&(\eqq_2) \\
\when\;\guard&\doo&\omega_1-\omega_2=0 &&(\eqq_{3}) \\
&\aand&\tau_1+\tau_2=0 &&(\eqq_{4}) \\
\when\;\prog{not}\;\guard&\doo&\tau_1=0 &&(\eqq_{5}) \\
&\aand&\tau_2=0 &&(\eqq_{6}) \\
\eea
\right.
\label{sys:coupledshafts}
\end{equation}
When $\guard=\ttt$, equations $(\eqq_3,\eqq_4)$ are active and equations $(\eqq_5,\eqq_6)$ are disabled, and vice-versa when $\guard=\fff$. If the clutch is initially released, then, at the instant of contact, the relative speed of the two rotating shafts jumps to zero; as a consequence, an impulse is expected on the torques. 

\subsection{Analyzing the two modes separately}
This model yields an ODE system when the clutch is released ($\guard=\fff$), and a DAE system of index $1$ when the clutch is engaged ($\guard=\ttt$), due to equation $(\eqq_3)$ being active in this mode.
The basic way of generating simulation code for the engaged mode ($\guard=\ttt$) is to add, to the model in this mode, the \emph{latent equation} $\remph{(\eqq'_{3})}$:
\begin{equation}
\left\{
\bea{lcc}
\omega'_1=f_1(\omega_1,\tau_1) &&(\eqq_1) \\
\omega'_2=f_2(\omega_2,\tau_2) &&(\eqq_2) \\
\omega_1-\omega_2=0 &&(\eqq_{3}) \\
\remph{\omega'_1-\omega'_2=0} &&\remph{(\eqq'_{3})} \\
\tau_1+\tau_2=0 &&(\eqq_{4}) 
\eea
\right.
\label{sys:engaged}
\end{equation}
By doing so, we can apply the following basic policy:
\beq
\mbox{
\begin{minipage}{14cm}
	 Regard \rref{sys:engaged} as a system of equations in which the dependent variables are the leading variables of \rref{sys:engaged}, i.e., the torques $\tau_i$ and the accelerations $\omega'_i$, and consider the $\omega'_i$'s as \emph{dummy derivatives,} i.e., fresh variables not related to the velocities $\omega_i$; ignore equation $(\eqq_{3})$ and solve the four remaining equations for the dependent variables, using an algebraic equation solver. 
\end{minipage}
}
\label{rtihjloiusdytfuy}
\eeq
This policy amounts to, first, bringing the DAE system back to an ``ODE-like'' form, then solving it as such. In this case, once the latent equation $(\eqq'_{3})$ was added, the Jacobian matrix of the system $(\eqq_1,\eqq_2,\eqq'_3,\eqq_4)$ with respect to its leading variables was invertible.
This is the classical approach to DAE simulation via ``index reduction''---having differentiated ${(\eqq_{3})}$ once indicates that \rref{sys:engaged} has index $1$. 
\begin{tool}
	\label{ltrgikuliu} The current practice in DAE-based modeling languages is that adding latent equations is performed on the basis of a \emph{structural analysis,} which is a semi-decision procedure using graph theoretic criteria, of much lower computational cost than that of studying the numerical invertibility of the Jacobian. We provide the bases for structural analysis of DAEs in Sections~$\ref{lrieughltiughliu}$ and~$\ref{lfdkuvekrughluh}$ of this report.
\end{tool}
The analysis of the system above in both its modes is classical. The handling of mode changes, however, is not, due to the occurrence of impulsive torques in the ``released\,$\ra$\,engaged'' transition. Also, the nature of time differs between the study of modes and that of mode changes. Each mode, ``released'' or ``engaged'', evolves in \emph{continuous-time} according to a DAE system. In contrast, the mode changes are events at which restart values for states must be computed from states before the change, in one or several \emph{discrete-time} transitions---note that such restart transitions are not explicitly specified in \rref{sys:coupledshafts}.

\subsection{Analyzing mode changes}
\label{edjywtefjyt}
Let us focus on the ``released\,$\ra$\,engaged'' transition, where torques are impulsive. 

To unify the views of time, we map the different kinds of time to a same discrete-time line, by discretizing the real time line $\bR_{\geq 0}$ using an \emph{infinitesimal} time step $\vsmall$. By doing so, the DAEs acting within the two ``released'' or ``engaged'' modes, get approximated with an \emph{infinitesimal} error. By ``infinitesimal'', we mean: ``smaller in absolute value than any non-zero real number''. All of this can be mathematically formalized by relying on \emph{Nonstandard Analysis}~\cite{Robinson,Cutland}, see~\cite{BENVENISTE2020,DBLP:journals/corr/abs-2008-05166} for details. In particular, \rref{sys:coupledshafts} is mapped to discrete-time by 
using the discrete-time line
\beq
\bT &\eqdef& \{n\vsmall\mid n=0,1,2,\dots\}
\label{lukgrhlkdut}
\eeq
and replacing in it all the derivatives by their first order explicit Euler schemes, that is:
\beq
\dot{\omega_i}(t) \mbox{ is replaced by }\frac{\omega_i(t+\vsmall)-\omega_i(t)}{\vsmall}\,.
\eeq
In (\ref{lukgrhlkdut}), integer $n$ must become large enough so that the whole time line $\bR_{\geq 0}$ is covered---infinite integers must be used, which is also formalized in Nonstandard Analysis. Using the \emph{shift} operators
\beq
\preset{x}(t) \eqdef x(t-\vsmall) &\mbox{and}&
\postset{x}(t) \eqdef x(t+\vsmall)\,,
\label{kbnsdlvbkujvbnhlk}
\eeq
we can rewrite \rref{sys:coupledshafts} as follows:
\begin{equation}
\left\{
\bea{rllcc}
&&\frac{\postset{\omega_1}-\omega_1}{\vsmall}=f_1(\omega_1,\tau_1) &&(\eqq_1) \\
&&\frac{\postset{\omega_2}-\omega_2}{\vsmall}=f_2(\omega_2,\tau_2) &&(\eqq_2) \\
\when\;\guard&\doo&\omega_1-\omega_2=0 &&(\eqq_{3}) \\
&\aand&\tau_1+\tau_2=0 &&(\eqq_{4}) \\
\when\;\prog{not}\;\guard&\doo&\tau_1=0 &&(\eqq_{5}) \\
&\aand&\tau_2=0 &&(\eqq_{6}) \\
\eea
\right.
\label{sys:ledgikujliku}
\end{equation}
Index reduction for the engaged mode can now be performed by shifting $(\eqq_{3})$ and adding it to (\ref{sys:ledgikujliku}), shown in {\color{red}red}:
\begin{equation}
\left\{
\bea{rllcc}
&&\frac{\postset{\omega_1}-\omega_1}{\vsmall}=f_1(\omega_1,\tau_1) &&(\eqq_1) \\
&&\frac{\postset{\omega_2}-\omega_2}{\vsmall}=f_2(\omega_2,\tau_2) &&(\eqq_2) \\
\when\;\guard&\doo&\omega_1-\omega_2=0 &&(\eqq_{3}) \\
&\aand&\remph{\postset{\omega_1}-\postset{\omega_2}=0} &&\remph{(\postset{\eqq_{3}})} \\
&\aand&\tau_1+\tau_2=0 &&(\eqq_{4}) \\
\when\;\prog{not}\;\guard&\doo&\tau_1=0 &&(\eqq_{5}) \\
&\aand&\tau_2=0 &&(\eqq_{6}) \\
\eea
\right.
\label{sys:lsedughlsu}
\end{equation}
\rref{sys:lsedughlsu} will be used to generate restart equations at impulsive mode change $\guard:\fff\ra\ttt$. 
At the
considered instant, we have $\preset{\guard}=\fff$ and
$\guard=\ttt$, see (\ref{kbnsdlvbkujvbnhlk}) for the definition of shift operators. Unfolding
\rref{sys:lsedughlsu} at the two successive previous (shown in
{\color{blue}blue}) and current (shown in black) instants yields, by
taking the actual values for the guard at those instants into account:
\beq\bea{rl} \bemph{\mbox{previous}}&\left\{ \bea{lcl}
  \bemph{\frac{{\omega_1}-\preset{\omega_1}}{\vsmall}=f_1(\preset{\omega_1},\preset{\tau_1})} && \bemph{(\preset{\eqq_1^\vsmall})} \\ [1mm]
  \bemph{\frac{{\omega_2}-\preset{\omega_2}}{\vsmall}=f_2(\preset{\omega_2},\preset{\tau_2})} && \bemph{(\preset{\eqq_2^\vsmall})} \\
  \bemph{\preset{\tau_1}=0} && \\
  \bemph{\preset{\tau_2}=0} && 
\eea
\right.
\\
\blemph{\mbox{current}}&\left\{
\bea{lcl}
\frac{\postset{\omega_1}-\omega_1}{\vsmall}=f_1(\omega_1,\tau_1) &~~\;&  \\ [1mm]
\frac{\postset{\omega_2}-\omega_2}{\vsmall}=f_2(\omega_2,\tau_2) && \\ [1mm]
\omega_1-\omega_2=0 && (\eqq_3) \\ 
\postset{\omega_1}-\postset{\omega_2}=0 && \\ 
\tau_1+\tau_2=0 && 
\eea
\right.
\eea
\label{sys:welguihwtuilh}
\eeq
We regard \rref{sys:welguihwtuilh} as an algebraic system
of equations with dependent variables being the leading variables of \rref{sys:lsedughlsu} at the
previous and current instants: 
$\bemph{\preset{\tau_i},{\omega_i}};{\tau_i},\postset{\omega_i}$ for
$i=1,2$.
\rref{sys:welguihwtuilh} is singular since the following subsystem
possesses five equations and only four dependent variables
$\omega_1,\omega_2,\preset{\tau_1},\preset{\tau_2}$:
\beq \left\{
  \bea{lcl}
  \bemph{\frac{{\omega_1}-\preset{\omega_1}}{\vsmall}=f_1(\preset{\omega_1},\preset{\tau_1})} && \bemph{(\preset{\eqq_1^\vsmall})} \\ [1mm]
  \bemph{\frac{{\omega_2}-\preset{\omega_2}}{\vsmall}=f_2(\preset{\omega_2},\preset{\tau_2})} && \bemph{(\preset{\eqq_2^\vsmall})} \\
  \bemph{\preset{\tau_1}=0} && \\
  \bemph{\preset{\tau_2}=0} && \\
\omega_1-\omega_2=0 && (\eqq_3) 
\eea
\right.
\label{oreiuyoweriu}
\eeq
We propose to resolve the conflict in (\ref{oreiuyoweriu}) by
applying the following causality principle:
\begin{principle}[causality]
	\label{oeuryfgy}  What was done at the previous instant cannot be undone at the current instant.
\end{principle}
This leads to removing, from subsystem (\ref{oreiuyoweriu}), the conflicting equation $(\eqq_3)$, thus getting the following problem for computing restart values at mode change $\guard:\fff\ra\ttt$:
\beq
\mbox{
\begin{minipage}{5.5cm}
	 solve for $\postset{\omega_1},\postset{\omega_2},\tau_1,\tau_2$ the following system, and then, use $\postset{\omega_1},\postset{\omega_2}$  as restart values for the velocities:
\end{minipage}
}
&
\left\{
\bea{lcl}\bemph{{\omega_1},{\omega_2},\preset{\tau_1},\preset{\tau_2} \mbox{ set by previous instant}} && \\ [1mm]
\frac{\postset{\omega_1}-\omega_1}{\vsmall}=f_1(\omega_1,\tau_1) &&  \\ [1mm]
\frac{\postset{\omega_2}-\omega_2}{\vsmall}=f_2(\omega_2,\tau_2) && \\ [1mm]
\postset{\omega_1}-\postset{\omega_2}=0 && \\ 
\tau_1+\tau_2=0 && 
\eea
\right.
\label{sys:segfuihpeiu}
\eeq
Note that the consistency equation $(\eqq_3):\omega_1-\omega_2=0$
has been removed from \rref{sys:segfuihpeiu}, thus modifying the original model. However, this removal occurs only at mode
change events \mbox{$\guard:\fff\ra\ttt$}. What we have done amounts
to \emph{delaying by one infinitesimal time step the satisfaction of some
  of the constraints in force in the new mode} $\guard=\ttt$. Since
our time step is infinitesimal, this takes zero real time. The move from \rref{sys:welguihwtuilh} to \rref{sys:segfuihpeiu} must be automatized, hence we need:
\begin{tool}
	\label{helrughsrilu} We need a graph based algorithm for identifying the conflicting equations possibly occurring at a mode change. This is addressed in Section~$\ref{fltorghbfdio}$ of this report, where the Dulmage-Mendelsohn decomposition of a bipartite graph is presented.
\end{tool}
Now, \rref{sys:segfuihpeiu} is not effective yet, since it involves, in the denominator of the left-hand side of the first two equations, the infinitesimal number $\vsmall$. Letting $\vsmall\searrow{0}$ in \rref{sys:segfuihpeiu} requires a different kind of technique, not related to structural analysis, to identify impulsive variables (here, the two torques) and to handle them properly. Details for these techniques are found in Sections 2.5.2 and 10 of~\cite{DBLP:journals/corr/abs-2008-05166}.

\section{Structural vs. numerical analysis of systems of equations}
\label{lrieughltiughliu}
In this section, we develop the structural analysis for a system of algebraic equations, i.e., equations in a Euclidian space, with no dynamics (no time, no derivative), of the form 
\begin{equation}
f_i(y_1,\dotsc,y_k,x_1,\dotsc,x_n)=0,\quad i=1,\dotsc,m\,.
\label{eq:F}
\end{equation}
This system is rewritten as $F(Y,X)=0$, where $Y$ and $X$ denote the vectors
$(y_1,\dotsc,y_k)$ and $(x_1,\dotsc,x_n)$, respectively, and $F$ is
the vector $(f_1,\dotsc,f_m)$. This system has $m$ equations, $k$ free variables (whose values are given) collected in vector $Y$, and $n$ \emph{dependent variables} (or unknowns) collected in vector $X$. Throughout this section , we assume that the $f_i$'s are all of class $\cC^1$ at least.

If the considered system (\ref{eq:F}) is square, i.e., if $m=n$, the \emph{Implicit Function Theorem} (see,
e.g., Theorem $10.2.2$ in \cite{DieudonneEA1}) states that, if
$(\vval{Y},\vval{X})\in\bR^{k+n}$ is a value for the pair $(Y,X)$ such that
$F(\vval{Y},\vval{X})=0$ and the Jacobian matrix of $F$ with respect to $X$  evaluated at $(\vval{Y},\vval{X})$ is nonsingular, then there
exists, in an open neighborhood $U$ of $\vval{Y}$, a unique vector of functions
$G$ such that $F(v,G(v))=0$ for all $v\in{U}$.  In words,
\rref{eq:F} uniquely determines $X$ as a function of $Y$, locally
around $\vval{Y}$. 

Denote by $\Jacobian_{\!X}{F}$ the above-mentioned Jacobian matrix. Solving $F=0$ for $X$, given a value $\vval{Y}$ for $Y$,
requires forming $\Jacobian_{\!X}{F}(\vval{Y})$ as well as inverting it.
One could avoid inverting $\Jacobian_{\!X}{F}$ (or even considering it at all), by focusing instead on its \emph{structural nonsingularity,} introduced in Section~\ref{lkgtuhhihliljfglq}. 

\subsection{{Structural nonsingularity of square matrices}}
\label{lkgtuhhihliljfglq}
Say that a subset of a Euclidian space is \emph{exceptional} if it has 
zero Lebesgue measure.
Say that a property $P(x_1,\dots,x_k)$ involving the real variables $x_1,\dots,x_k$ holds \emph{almost everywhere} if it holds for every $x_1,\dots,x_k$ outside an {exceptional} subset of $\bR^k$. 

Square matrix $P$ of size $n$ is a \emph{permutation
  matrix} if and only if there exists a permutation $\sigma$ of the set $\{1,\dots,n\}$ such that $p_{ij}=1$ if $j=\sigma(i)$ and $p_{ij}=0$ otherwise. 
Pre-multiplication (respectively, post-multiplication) of a matrix $A$ by a permutation matrix results in
permuting the rows (resp., the columns) of $A$. 

These preliminary definitions and properties make it possible to prove the following lemma. This result, dealing in particular with the invertibility of sparse matrices, is part of the folklore of High Performance Computing. However, we were unable to find any proper reference in which it is clearly stated (let alone proven).
%
\begin{lemma} \label {rtgpoupiepo} Let $A$ be an $n{\times}m$-matrix with $m\geq{n}$. The following two properties are equivalent:
\begin{enumerate}
	\item \label{lrukghliruh}
There exist two permutation matrices $P$ and $Q$, such that $P\!AQ=[\,B_1~B_2\,]$, where $B_1$ is square with non-zero entries on its main diagonal---we say that $A$ is \emph{structurally onto};
\item \label{lkughkjyfukgfl}
Matrix $A$ remains almost everywhere onto (surjective) when its non-zero entries vary over some neighborhood. 
\end{enumerate}
\end{lemma}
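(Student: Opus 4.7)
The plan is to prove the two implications separately, using polynomial/algebraic ideas for one direction and Hall's marriage theorem for the other.

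For the direction \ref{lrukghliruh} $\Rightarrow$ \ref{lkughkjyfukgfl}, I would first observe that pre- and post-multiplication by permutation matrices preserves surjectivity, so it suffices to establish that $B_1$ is invertible almost everywhere. I would then analyze $\det(B_1)$ via the Leibniz formula, viewing each non-zero entry of $B_1$ as an independent indeterminate. Each permutation $\sigma$ contributes either the zero polynomial (when some $(B_1)_{i,\sigma(i)}$ is structurally zero) or the monomial supported on the transversal $\{(i,\sigma(i))\}_i$. Distinct permutations yield distinct transversals, and thus distinct monomials, so the identity term $\prod_i (B_1)_{i,i}$ survives uncancelled thanks to the diagonal hypothesis. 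This shows that $\det(B_1)$, viewed as a polynomial in the non-zero entries, is non-zero. Invoking the classical fact that the zero set of a non-zero polynomial on $\bR^k$ has zero Lebesgue measure, I conclude that $B_1$ — and hence $A$ — remains invertible/surjective almost everywhere in any neighborhood of the current values of the non-zero entries.

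For the reverse implication \ref{lkughkjyfukgfl} $\Rightarrow$ \ref{lrukghliruh}, I would argue by contraposition. Associate with $A$ its bipartite support graph $G=(R,C,E)$, where $R$ indexes the rows, $C$ the columns, and $(i,j)\in E$ iff $A_{ij}\neq 0$. Property \ref{lrukghliruh} is then exactly the statement that $G$ admits a matching saturating $R$. If no such matching exists, Hall's marriage theorem yields a subset $S\subseteq R$ with $|N(S)|<|S|$. The rows of $A$ indexed by $S$ then have all their non-zero entries confined to the columns $N(S)$, so they lie in a subspace of $\bR^m$ of dimension $|N(S)|<|S|$. These rows are consequently linearly dependent for \emph{every} value of the non-zero entries, forcing the rank of $A$ to be strictly less than $n$ throughout; thus $A$ is never onto, contradicting \ref{lkughkjyfukgfl}.

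The delicate step I anticipate is the non-cancellation argument in the first implication: one must carefully view each distinct non-zero entry of $B_1$ as an independent indeterminate, so that distinct transversals produce genuinely distinct monomials and cannot annihilate each other in the Leibniz sum. Once this formalism is in place, the rest is routine — Hall's theorem supplies the combinatorics, and the measure-zero property of the zero locus of a non-trivial polynomial supplies the analytic conclusion.
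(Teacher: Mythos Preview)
Your proof is correct and takes a genuinely different route from the paper's.

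For \ref{lrukghliruh}$\Rightarrow$\ref{lkughkjyfukgfl}, the paper argues by iterated Gaussian elimination: it pivots on the successive diagonal entries of $B_1$, shows that each pivot remains non-zero on a zero-pattern preserving neighborhood, and that the next diagonal entry after elimination vanishes only on an exceptional set; intersecting the neighborhoods and taking the union of the exceptional sets over the $n$ elimination steps yields the conclusion. Your Leibniz-formula argument is shorter and more conceptual: treating the non-zero entries as independent indeterminates, distinct permutations contribute distinct monomials, so the diagonal term cannot be cancelled and $\det(B_1)$ is a non-trivial polynomial whose zero locus is null. For \ref{lkughkjyfukgfl}$\Rightarrow$\ref{lrukghliruh}, the paper proceeds directly: it partitions the given neighborhood according to which column permutation $Q$ yields an invertible leading $n{\times}n$ block, selects a cell $V_{\sigma_*}$ with non-empty interior, and then recursively applies cofactor expansion of $\det(B_1)$ to extract a row permutation $P$ making the diagonal non-zero. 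Your contrapositive via Hall's theorem is again more economical: a Hall violator $S$ confines $|S|$ rows to a coordinate subspace of dimension $|N(S)|<|S|$, forcing rank deficiency for \emph{every} choice of the non-zero entries.

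What each approach buys: the paper's Gaussian-elimination proof is constructive and dovetails with the numerical solution procedure that motivates the lemma (one is effectively exhibiting the pivoting that will later be used to solve $AX=Y$). Your polynomial/Hall argument is cleaner, requires no tracking of neighborhoods through $n$ stages, and makes the ``almost everywhere'' conclusion immediate from a single standard fact about real varieties. The non-cancellation step you flag as delicate is exactly right and is handled correctly by your observation that distinct transversals yield distinct monomials in independent indeterminates.
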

\begin{proof}
We consider the linear equation $AX=Y$, where $Y\in\bR^n$ has entries $y_1,\dots,y_n$, and $X$ is the unknown vector with real entries $x_1,\dots,x_m$. With a suitable renumbering of the coordinates of $X$, we can assume that $Q$ is the identity matrix. 
For this proof, we will need to formalize what we mean by 
\beq
\mbox{
	 ``letting the non-zero entries of matrix $A$ vary over a neighborhood.''
}
\label{ifutyegi}
\eeq
To this end, we consider the \emph{non-zero pattern} of matrix $A$, i.e., the subset $\cP\subseteq[1,n]\times[1,m]$ collecting the pairs $(i,j)$ such that $a_{ij}$ is non-zero; let $K$ be the cardinal of $\cP$. Order the set $\cP$ by lexicographic order, namely $(i,j)<(i',j')$ iff either $i<i'$, or $i=i' \; \land \; j<j'$. Doing this defines an order preserving bijection $\psi:[1,K]\ra{\cP}$. The non-zero entries of matrix $A$ are then collected in the vector $\vvec{A}$ of $\bR^K$ whose components are the $a_{\psi(k)},k=1,\dots,K$ . To every neighborhood $U$ of $\vvec{A}$, we can thus associate the set of $(n{\times}m)$-matrices
\beq
V &=&\left\{
A'=(a'_{ij}) \left|\, a'_{ij} = \xi_{\psi^{-1}(i,j)} \mbox{ if }(i,j)\in{\cP}, \; 0 \mbox{ otherwise} 
\right.\right\}\,,
\label{wujhsytdfujhyt}
\eeq
where the vector of $\bR^K$, whose entries are the $\xi_k$ for ${1{\leq}k{\leq}K}$, ranges over $U$. A set $V$ of matrices obtained in this way is called a \emph{zero-pattern preserving neighborhood of $A$,} or simply \emph{neighborhood of $A$} when ``zero-pattern preserving'' is understood from the context. With this preliminary, we successively prove the two implications.

\myparagraph{\ref{lrukghliruh}$\implies$\ref{lkughkjyfukgfl}}
We need to prove the existence of a zero-pattern preserving neighborhood of $A$ such that, for every $Y$, there exists an $X$ such that $A'X=Y$ holds almost everywhere when matrix $A'$ ranges over this neighborhood. With a suitable renumbering of the equations, we can assume that $P$ is the identity matrix.
 
Let $V_1$ be a zero-pattern preserving neighborhood of $A$ such that the first diagonal term ${a}'_{11}$ of ${A}'=P\!A'$ remains non-zero when $A'$ varies over $V_1$. Since ${a}'_{11}\neq{0}$ holds when $A'$ varies over $V_1$, 
one can perform, on ${A}'$, the first step of a Gaussian elimination by using ${a}'_{11}$ as a pivot. As $x_1$ is thus expressed in terms of $x_2,\dots,x_m,y_1,\dots,y_n$, the first equation may then be removed. 
This  yields a reduced equation $A'_1X_1=Y_1$, where $A'_1$ is  an \mbox{$(n-1){\times}(m-1)$}-matrix, $Y_1\in\bR^{n-1}$ collects $y_{\sigma(2)},\dots,y_{\sigma(n)}$,  and $X_1\in\bR^{m-1}$ collects the unknowns $x_2,\dots,x_m$. Matrix $A'_1$ has diagonal entries equal to  $a^{1\prime}_{kk}={a}'_{kk}-{a}'_{k1}/{a}'_{11}$, for $k=2,\dots,n$. Since ${a}'_{11}$ is non-zero, the set of entries ${a}'_{ij}$ of matrix ${A}'$ causing $a^{1\prime}_{kk}=0$ for some $k$ is exceptional in $\bR^{n{\times}m}$ since it requires the condition ${a}'_{k1}={a}'_{kk}{a}'_{11}$ to hold. The corresponding set of entries $a'_{ij}$ of matrix $A'$ is exceptional as well; we denote by $\Xi_1$ this exceptional subset of $\bR^{n{\times}m}$. Thus, we assume that $A'$ stays within $V_1\setminus\Xi_1$.
	
	This ensures that the diagonal entries of matrix $A'$ are all non-zero. Thus, we can express $x_{2}$ in terms of $x_3,\dots,{x}_{m},y_{\sigma(2)}$. We are then left with a further reduced equation 
  $A'_1X'_1=Y'_1$ for which the same argument applies, namely: if matrix $A'$ stays within a certain zero-pattern preserving neighborhood $V_2$ of $A$, but does not belong to some exceptional set $\Xi_{2}$, then the diagonal entries of $A'_1$ are all non-zero. And so on, by reducing the number of equations and unknowns. 
	
By iterating this process, we prove the existence of sets of matrices $V_1,\dots,V_n$ and $\Xi_{1},\Xi_{2},\dots,\Xi_{n}$ such that, for every matrix belonging to $(V_1\cap\dots\cap{V_n})\setminus(\Xi_{1}\cup\Xi_{2}\cup\dots\cup\Xi_{n})$, the values of $x_1,x_{2},\dots,x_n$ are uniquely determined as functions of the coordinates of $Y$ and the other variables $x_{n+1},\dots,x_m$, by pivoting. This concludes the proof of this implication, since $V_1\cap\dots\cap{V_n}$ is a  zero-pattern preserving neighborhood of $A$ and set $\Xi_{1}\cup\Xi_{2}\cup\dots\cup\Xi_{n}$ is exceptional.
	
\myparagraph{\ref{lkughkjyfukgfl}$\implies$\ref{lrukghliruh}}
We assume the existence of a zero-pattern preserving neighborhood $V$ of $A$ and an exceptional set $\Xi$ such that $A'$ remains onto when varying over $V\setminus\Xi$. For each onto $A'$, there exists at least one permutation matrix $Q$ such that 
\beq
A'Q&=&\left[\bea{cc}B_1&B_2\eea\right]
\label {wo45tguw}
\eeq
where $B_1$ is a square invertible matrix. Neighborhood $V$ decomposes as the union $V=\bigcup_\sigma V_\sigma$, where $V_\sigma$ collects the matrices $A'$ for which decomposition (\ref{wo45tguw}) holds with the matrix $Q$ representing permutation $\sigma$. Since the number of permutations is finite, at least one of the $V_\sigma$ has non-empty interior, say, for $\sigma=\sigma_*$. We can thus trim $V$ to $V_{\sigma_*}$, which still is a zero-pattern preserving neighborhood of $A$, and we call it again $V$ for convenience.

Hence, there exists a permutation matrix $Q$ such that decomposition (\ref{wo45tguw}) holds for every $A'\in{V}$. By Condition~\emph{\ref{lkughkjyfukgfl}} of Lemma~\ref{rtgpoupiepo}, the  pre-image of $0$ by the map $V\ni{A'}\mapsto{\det}(B_1)$, where $\det(M)$ denotes the determinant of a square matrix $M$, is an exceptional set. 
Now, we have $\det(B_1)=\sum_{i=1}^{n}(-1)^{i+1}{b}_{i1}\det(B_{i1})$, where $B_{ij}$ is the submatrix of $B_1$ obtained by erasing row $i$ and column $j$. Since $B_1$ is almost everywhere invertible, there must be some $i\in\{1,\dots,n\}$ such that $b_{i1}\neq{0}$ and $B_{i1}$ is almost everywhere invertible---otherwise we would have $\det(B_1)=0$ for any ${A'}\in{W}$, where $W$ is some neighborhood contained in $V$. Let $P^1$ be the permutation matrix exchanging rows $1$ and $i$, so we replace $B_1$ by $B_1^1=P^1B_1$. Now, since we have $b^1_{11}=b_{i1}\neq{0}$ and $B^1_{11}$ is almost everywhere invertible. This process is iterated $m-1$ times. For the last $1$-row matrix, its diagonal entry is non-zero, otherwise the determinant of the corresponding $B_1$ matrix would be zero in a neighborhood of $A$. The wanted left permutation matrix is $P=P^{m-1}{\dots}P^2P^1$. 
\end{proof}
Lemma\,\ref{rtgpoupiepo} specializes to the following simpler result:
\begin{corollary} \label {fq09t7ghouiwo} Let $A$ be an $n{\times}n$-matrix. The following two properties are equivalent:
\begin{enumerate}
	\item \label{otuhprouh}
There exist two permutation matrices $P$ and $Q$, such that 
$P\!AQ=B$,
where $B$ has non-zero entries on its main diagonal---we say that $A$ is \emph{structurally nonsingular};
\item \label{krfygrkuygfou}
Matrix $A$ remains almost everywhere invertible when its non-zero entries vary over some neighborhood. 
\end{enumerate}
\end{corollary}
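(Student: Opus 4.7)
The plan is to deduce the corollary as a direct specialization of Lemma~\ref{rtgpoupiepo} to the square case $m=n$. First, I would observe that for a square matrix, being surjective is equivalent to being invertible (by the rank-nullity theorem, or equivalently by the fact that a square matrix of full rank is necessarily bijective). Thus Condition~\ref{krfygrkuygfou} here (almost everywhere invertibility under zero-pattern preserving perturbations) is the same as Condition~\ref{lkughkjyfukgfl} of Lemma~\ref{rtgpoupiepo} when $m=n$.

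Next, I would examine the decomposition $PAQ = [\,B_1 \; B_2\,]$ produced by Lemma~\ref{rtgpoupiepo} in the case $m=n$. Since $B_1$ must be square and the total number of columns equals $n$, the block $B_2$ has zero columns and disappears, leaving $PAQ = B_1 = B$, a square matrix with non-zero diagonal entries. This is exactly Condition~\ref{otuhprouh} of the corollary. Consequently, the two equivalent conditions of Lemma~\ref{rtgpoupiepo} specialize verbatim to the two equivalent conditions of the corollary.

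\begin{proof}
Apply Lemma~\ref{rtgpoupiepo} with $m=n$. A square matrix is onto if and only if it is invertible, so Condition~\ref{lkughkjyfukgfl} of Lemma~\ref{rtgpoupiepo} becomes Condition~\ref{krfygrkuygfou} of the present corollary. Moreover, when $m=n$, the block $B_2$ in the decomposition $PAQ=[\,B_1~B_2\,]$ is empty, so that $PAQ=B_1=B$ is a square matrix with non-zero entries on its main diagonal; hence Condition~\ref{lrukghliruh} of Lemma~\ref{rtgpoupiepo} becomes Condition~\ref{otuhprouh} of the corollary. The two equivalences thus coincide, which completes the proof.
\end{proof}

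The argument is essentially bookkeeping; the only subtlety worth flagging is confirming that surjectivity and invertibility genuinely coincide in the square case under the ``almost everywhere'' qualifier. This is immediate because the set of non-invertible square matrices with a fixed non-zero pattern is the zero set of the determinant polynomial in the pattern variables, hence either all of the parameter space or an exceptional subset of it; the same dichotomy governs non-surjectivity, and the two agree for square matrices.
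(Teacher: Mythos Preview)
Your proof is correct and matches the paper's approach exactly: the paper introduces the corollary with the sentence ``Lemma\,\ref{rtgpoupiepo} specializes to the following simpler result'' and gives no further proof, treating the specialization $m=n$ (hence $B_2$ empty and onto $\Leftrightarrow$ invertible) as immediate. Your write-up simply makes those two observations explicit.
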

%

\subsection{Structural analysis of algebraic equations}
\label{sec:erligtuhouih}
In this section, we focus on systems of algebraic equations $F(Y,X)=0$ of the form (\ref{eq:F}), that is, equations involving no time and no dynamics. 
%
We will need to handle variables, their valuations, and vectors thereof. To this end, the following conventions will be used (unless no confusion occurs from using more straightforward notations):
\begin{notation}\rm 
	\label{shltrguioh}  Lowercase letters ($x,y,...$) denote scalar real variables; capitals ($X,Y,...$) denote vectors of real variables; whenever convenient, we regard $X$ as a set of variables and write $x\in{X}$ to refer to an entry of $X$. 
	 We adopt similar conventions for functions ($f,g$,...) and vectors of functions ($F,G$,...).
	 A value for a variable $x$ is generically denoted by $\vval{x}$, and similarly for $X$ and $\vval{X}$.\eproof
\end{notation}


As explained in the introduction of this section, the existence and uniqueness of solutions for system (\ref{eq:F}) relates to the invertibility of its Jacobian $\Jacobian_{\!X}{F}$. The structural analysis of system (\ref{eq:F}) is built on top of the structural nonsingularity of its Jacobian, introduced in Section~\ref{lkgtuhhihliljfglq}. Its full development requires some background on graphs.

\subsubsection{Background on graphs:}
This short background is compiled from~\cite{Berge1962,Hogb06,hogben14}. 
Given a graph $\cG=(V,E)$, where $V$ and $E\subseteq{V}{\times}{V}$ are the sets of vertices and edges, a \emph{matching} $\cM$ of $\cG$ is a set of edges of $\cG$ such that no two edges in $\cM$ are incident on a common vertex. Matching $\cM$ is called \emph{complete} if it covers all the vertices of $\cG$. 
An \emph{$\cM$-alternating path} is a path of $\cG$ whose edges are alternatively in $\cM$ and not in $\cM$---we simply say ``alternating path'' when $\cM$ is understood from the context. 
A vertex is \emph{matched} in $\cM$ if it is an endpoint of an edge in $\cM$, and \emph{unmatched} otherwise. 

A \emph{bipartite graph} is a graph $\cG=(L{\cup}{R},E)$, where $E\subseteq{L}{\times}R$.
Let $F(Y,X)=0$ be a system of equations of the form (\ref{eq:F});  its \emph{bipartite graph} $\cG_F$ is the graph having $F{\cup}{X}$ as its set of vertices and an edge $(f_i,x_j)$ if and only if  variable $x_j$ occurs in function $f_i$. 

\subsubsection{Square systems} We first consider square systems, in which the equations and the dependent variables are in equal numbers.
\begin{definition}[structural nonsingularity]
	\label{erfuilehui} System $F(Y,X)=0$ is called \emph{structurally nonsingular} if its bipartite graph $\cG_F$ possesses a complete matching.
\end{definition}
A structurally nonsingular system is necessarily square, i.e., with equations and dependent variables in equal numbers. Note that Condition~\ref{otuhprouh} of Corollary~\ref{fq09t7ghouiwo} is a matrix reformulation of the graph theoretic condition stated in Definition~\ref{erfuilehui}.
Thus, $F$ is structurally nonsingular in the sense of Definition~\ref{erfuilehui}, 
if and only it its Jacobian $\Jacobian_{\!X}{F}$, evaluated at any pair $(\vval{Y},\vval{X})$ such that $F(\vval{Y},\vval{X})=0$, is structurally nonsingular in the sense of Corollary~\ref{fq09t7ghouiwo}.
The link to numerical regularity is thus formalized in the following lemma:\footnote{This result is implicitly used in~\cite{pantelides,MattssonSoderlin1993,PothenF90,Pryce01}, as well as in a major part of the literature on the structural analysis of DAE systems.}
%
\begin{lemma}
	\label{oerigtuerio} Assume that $F$ is of class $\cC^1$. The following properties are equivalent:
\begin{enumerate}
	\item System $F(Y,X)=0$ is {structurally nonsingular};
	\item \label{oweruigyweo} For every $(\vval{Y},\vval{X})$ satisfying $F(\vval{Y},\vval{X})=0$, the Jacobian matrix $\Jacobian_{\!X}{F}(\vval{Y},\vval{X})$ remains {generically}\,\footnote{Generically means: outside a set of values of Lebesgue measure zero.\label{erpgfuieh}} nonsingular when its non-zero coefficients vary over some neighborhood.
\end{enumerate}
\end{lemma}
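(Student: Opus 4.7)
The plan is to reduce the statement to Corollary~\ref{fq09t7ghouiwo} applied to the Jacobian matrix $A = \Jacobian_{\!X}{F}(\vval{Y},\vval{X})$. The bridge is the observation that the bipartite graph $\cG_F$ is precisely a graph-theoretic rephrasing of the sparsity (zero/non-zero) pattern of the Jacobian: by the definition of $\cG_F$, an edge $(f_i,x_j)$ exists exactly when the variable $x_j$ occurs in $f_i$, i.e.\ when $\partial f_i/\partial x_j$ is not identically zero. Hence the coefficients of $\Jacobian_{\!X}{F}$ that are designated ``non-zero'' in the sense of the proof of Lemma~\ref{rtgpoupiepo} are exactly those $(i,j)$ for which $(f_i,x_j)$ is an edge of $\cG_F$; correspondingly, the zero-pattern preserving neighborhoods of $A$ are the ones in which exactly the same set of variable occurrences is preserved.

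Under this dictionary, a complete matching of $\cG_F$ is precisely a permutation $\sigma$ of $\{1,\dotsc,n\}$ such that for every $i$, the entry $(i,\sigma(i))$ of $\Jacobian_{\!X}{F}$ belongs to the non-zero pattern. Taking $P$ to be the identity and $Q$ to be the permutation matrix associated with $\sigma^{-1}$ exhibits a factorization $P\!AQ = B$ with non-zero diagonal in the structural sense; conversely, any pair of permutation matrices realizing such a factorization yields a permutation witnessing a complete matching of $\cG_F$. Thus structural nonsingularity of $F$ in the sense of Definition~\ref{erfuilehui} is equivalent to Condition~\ref{otuhprouh} of Corollary~\ref{fq09t7ghouiwo} applied to $A$, for every $(\vval{Y},\vval{X})$ satisfying $F(\vval{Y},\vval{X})=0$.

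To conclude, I apply the equivalence \ref{otuhprouh}\,$\Leftrightarrow$\,\ref{krfygrkuygfou} of Corollary~\ref{fq09t7ghouiwo}: this transports the matrix-level equivalence into the statement of the lemma, the clause ``its non-zero coefficients vary over some neighborhood'' being nothing but the zero-pattern preserving neighborhoods already formalized in~(\ref{wujhsytdfujhyt}). Because the characterization of $\cG_F$ does not depend on the particular pair $(\vval{Y},\vval{X})$, the conclusion is uniform over all such pairs, matching the universal quantifier in property~\ref{oweruigyweo}.

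The hard part is mostly bookkeeping rather than mathematics: one must be careful that ``non-zero coefficient'' of $\Jacobian_{\!X}{F}(\vval{Y},\vval{X})$ is interpreted structurally, i.e.\ as an entry whose partial derivative is not identically zero as a function of $(\vval{Y},\vval{X})$, rather than as an entry that happens to be numerically non-zero at the specific point $(\vval{Y},\vval{X})$ (the latter reading would cause trouble at accidental zeros such as $\partial(x_1x_2)/\partial x_1$ evaluated at $x_2=0$). This is the same convention adopted in the proof of Lemma~\ref{rtgpoupiepo}, so no new technical tool is needed; once the convention is made explicit, the reduction to Corollary~\ref{fq09t7ghouiwo} is essentially immediate.
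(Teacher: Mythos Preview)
Your proposal is correct and follows exactly the route the paper takes: the paper does not give a separate formal proof environment for this lemma, but the surrounding discussion reduces it to Corollary~\ref{fq09t7ghouiwo} via the identification of the bipartite graph $\cG_F$ with the non-zero pattern of $\Jacobian_{\!X}{F}$, which is precisely your argument. Your added remark on the structural (not-identically-zero) reading of ``non-zero coefficient'' is a welcome clarification consistent with the paper's conventions.
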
  
Property\,{\ref{oweruigyweo}} is interpreted using the argument developed in (\ref{ifutyegi},\ref{wujhsytdfujhyt}): every $n{\times}n$-matrix having the same non-zero pattern as $\Jacobian_{\!X}{F}(\vval{Y},\vval{X})$ identifies with a unique vector of $\bR^K$, where $K$ is the cardinal of this non-zero pattern.
We denote by $\cJ_{\!X}{F}\in\bR^K$ the image of the Jacobian $\Jacobian_{\!X}{F}(\vval{Y},\vval{X})$ obtained via this correspondence. Property\,{\ref{oweruigyweo}} says: 
\beq
\mbox{
\begin{minipage}{11cm}
	 There exist an open neighborhood $U$ of $\cJ_{\!X}{F}$ in $\bR^K$, and a subset $V{\subseteq}{U}$ such that: $(i)$ the set $U\setminus{V}$ has zero Lebesgue measure, and $(ii)$ every $\cJ\in{V}$ yields a regular matrix.
\end{minipage}
} 
\label{leiuthpieu}
\eeq
In the sequel, the so defined sets $U$ and $V$ will be denoted by 
\beq
 \regularU{F}{\vval{Y},\vval{X}} &\mbox{and}& \regularV{F}{\vval{Y},\vval{X}}
 \label{lerigfui}
\eeq
or simply $U_{\!F}$ and $V_{\!F}$ when no confusion can result.\eproof

\smallskip

Let $F(Y,X)=0$ be structurally nonsingular and let $\cM$ be a complete matching for it. Using $\cM$, graph $\cG_F$ can be directed as follows: edge $(f_i,x_j)$ is directed from $f_i$ to $x_j$ if $(f_i,x_j)\in\cM$, from $x_j$ to $f_i$ otherwise. Denote by ${\vec{\cG}}^\cM_F$ the resulting directed graph. The following result holds~(\cite{Duff86}, Chapter 6.10):
\begin{lemma}
	\label{jytdfeoguip} The strongly connected components of ${\vec{\cG}}^\cM_F$ are independent of $\cM$. 
\end{lemma}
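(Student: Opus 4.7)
The plan is to show that any two complete matchings $\cM$ and $\cM'$ yield the same strongly connected components by transforming $\cM$ into $\cM'$ through a sequence of local switches, each of which preserves the SCC decomposition of the oriented graph.

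First, I would invoke a classical fact about bipartite matchings: since $\cM$ and $\cM'$ both saturate every vertex of $\cG_F$, the symmetric difference $\cM\triangle\cM'$ is a disjoint union of $\cM$-alternating (even) cycles $C_1,\dots,C_r$. There are no alternating paths in the symmetric difference, because every vertex has degree exactly one in each matching, hence degree $0$ or $2$ in $\cM\triangle\cM'$. Setting $\cM_0=\cM$ and $\cM_i=\cM_{i-1}\triangle C_i$, so that $\cM_r=\cM'$, it suffices to prove the lemma in the case where $\cM'$ is obtained from $\cM$ by switching along a single alternating cycle $C$.

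Next, I would unwind the orientation rule along $C$. Writing $C$ as $f_1{-}x_1{-}f_2{-}\cdots{-}f_k{-}x_k{-}f_1$ with $(f_i,x_i)\in\cM$ and $(x_i,f_{i+1})\notin\cM$ (indices mod $k$), the convention of the lemma produces in $\vec{\cG}^{\cM}_F$ a directed cycle $\vec{C}:\, f_1{\to}x_1{\to}f_2{\to}\cdots{\to}f_k{\to}x_k{\to}f_1$. After switching to $\cM'$, every edge of $C$ changes matching status, so in $\vec{\cG}^{\cM'}_F$ each edge of $\vec{C}$ is reversed while every edge outside $C$ keeps its orientation. Thus the lemma reduces to the following purely graph-theoretic claim: reversing all edges of a directed cycle in a directed graph preserves the partition into strongly connected components.

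The main step---and the only genuinely nontrivial one---is proving this reversal claim. Let $D$ be a directed graph containing a directed cycle $\vec{C}$, and let $D'$ be obtained from $D$ by reversing all edges of $\vec{C}$. Since $D$ is recovered from $D'$ by the same operation, it suffices to show that if $u$ reaches $v$ in $D$, then $u$ reaches $v$ in $D'$. Take a directed path $\pi$ from $u$ to $v$ in $D$. Every edge of $\pi$ that does not lie on $\vec{C}$ is still present in $D'$. For every edge $(a,b)$ of $\pi$ that lies on $\vec{C}$, the complementary arcs of $\vec{C}$ form a directed path from $b$ to $a$ in $D$; after reversal, the same arcs form a directed path from $a$ to $b$ in $D'$. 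Substituting each such detour into $\pi$ yields a directed walk from $u$ to $v$ in $D'$, hence a directed path. Applying the identical argument starting from $D'$ gives the converse implication. Therefore $u$ and $v$ lie in the same SCC of $D$ if and only if they lie in the same SCC of $D'$.

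Iterating this statement over $C_1,\dots,C_r$ shows that $\vec{\cG}^{\cM}_F$ and $\vec{\cG}^{\cM'}_F$ share the same partition into strongly connected components, proving the lemma. I expect the reduction steps (symmetric-difference decomposition and the reformulation as cycle reversal) to be immediate, while the cycle-reversal argument itself, though elementary, is the delicate combinatorial core where the path-substitution trick must be used carefully.
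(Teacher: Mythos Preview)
Your argument is correct. The decomposition of $\cM\triangle\cM'$ into alternating cycles (using that both matchings are perfect), the observation that switching along one such cycle reverses exactly the edges of a directed cycle in $\vec{\cG}^{\cM}_F$, and the path-substitution argument showing that reversing a directed cycle preserves reachability (hence SCCs) are all sound. The symmetry step is also valid, since $D$ is recovered from $D'$ by reversing the directed cycle $\vec{C}^{-1}$.

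As for comparison with the paper: the paper does \emph{not} supply its own proof of this lemma. It merely states the result and cites Duff's book (Chapter~6.10) as the source. So there is no ``paper's approach'' to compare against; you have provided a self-contained elementary proof where the authors chose to defer to the literature. Your cycle-reversal reduction is in fact the standard argument underlying this classical fact about the fine Dulmage--Mendelsohn decomposition, so your write-up is entirely in line with what a textbook proof would look like.
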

Each strongly connected component defines a \emph{block} of $F$, consisting of the set of all equations that are vertices of this component. Blocks are partially ordered and we denote by $\preceq_F$ this order. Extending $\preceq_F$ to a total order (via topological sorting) yields an ordering of equations that puts the Jacobian matrix $\Jacobian_{\!X}F$ in \emph{Block Triangular Form} (BTF).

\begin{figure}[ht]
  \centering
		\centerline{
		\includegraphics[width=17cm]{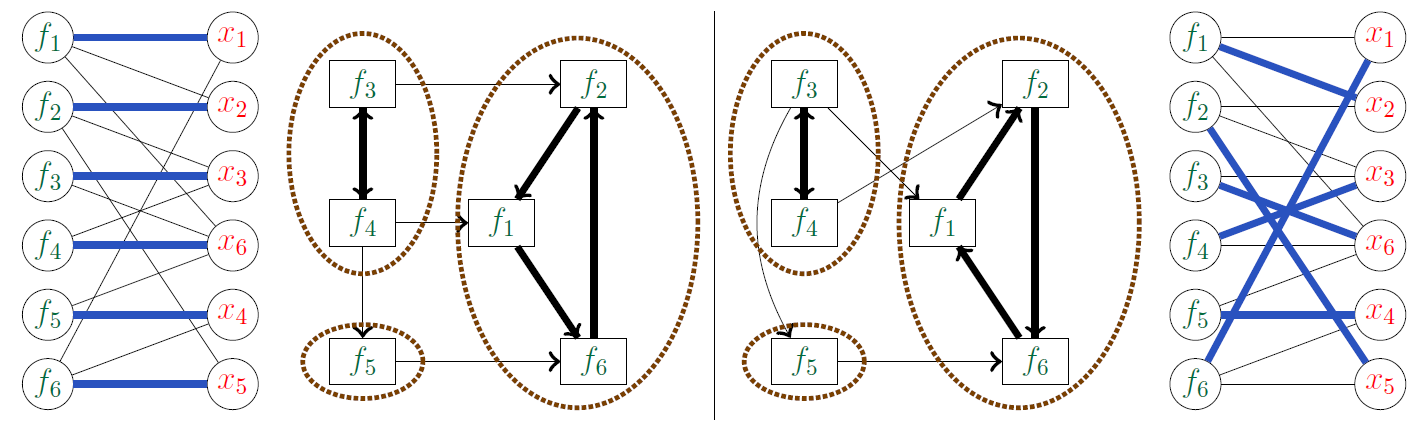}
		}
  \caption{Illustrating Lemma~\ref{jytdfeoguip} (we only show the projection of strongly connected components on function nodes).}
 \label{fig:BTF}
\end{figure}

This lemma is illustrated in \rref{fig:BTF}, inspired by 
\href{https://graal.ens-lyon.fr/~bucar/CR07/lecture-matching.pdf}{lecture notes by J-Y. L'Excellent and Bora U\c car, 2010}.
In this figure, a same bipartite graph $\cG$ is shown twice (left and right), with the equations sitting on the left-hand side in green, and the variables on the right-hand side in red. Two complete matchings $\cM_1$ (left) and $\cM_2$ (right) are shown in thick blue, with other edges of the bipartite graph being black. The restriction, to the equation vertices, of the two directed graphs ${\vec{\cG}}^{\cM_1}_F$ (left) and ${\vec{\cG}}^{\cM_2}_F$ (right) are shown on both sides.
Although the two directed graphs ${\vec{\cG}}^{\cM_1}_F$ and ${\vec{\cG}}^{\cM_2}_F$ differ, the resulting block structures (encircled in dashed brown) are identical.

\subsubsection{Non-square systems, Dulmage-Mendelsohn decomposition}
\label{fltorghbfdio}
In our development, we will also encounter non-square systems of algebraic equations. For general systems (with a number of variables not necessarily equal to the
 number of equations, $n{\neq}{m}$ in (\ref{eq:F})), call \emph{block} a pair $\block=(\bff,\bfx)$, consisting of a subsystem $\bff{=}0$ of system $F{=}0$ and the subset $\bfx$ of its dependent variables. 
\begin{definition}[Dulmage-Mendelsohn]
	\label{elrftuierhfperui}  For $F=0$ a general system of algebraic equations, the \emph{Dulmage-Mendelsohn decomposition~\cite{PothenF90}} of $\cG_F$ yields the partition of system $F=0$ into the following three blocks, given some matching of maximal cardinality for $\cG_F$:
\begin{itemize}

\vspace*{-2mm}

 \item 
block $\block_\overapprox$ collects the variables and equations reachable via some alternating path from some unmatched equation;

\vspace*{-2mm}

	\item 
  block $\block_\underapprox$ collects the variables and equations reachable via some alternating path from some unmatched variable;
  
\vspace*{-2mm}

 \item 
block $\block_\squared$ collects the remaining variables and equations. 
  
\vspace*{-1mm}

\end{itemize}
Blocks $\block_\overapprox,\block_\underapprox,\block_\squared$ are the $\overapprox$verdetermined, $\underapprox$nderdetermined, and $\squared$nabled parts of $F$.
\end{definition}
Statement~\ref{erpoigfehpguio} of the following lemma ensures that Definition~\ref{elrftuierhfperui} is meaningful:
\begin{lemma}
	\label{erlihpfiu} \
\begin{enumerate}
	\item \label{erpoigfehpguio} The triple $(\block_\underapprox,\block_\squared,\block_\overapprox)$ defined by the Dulmage-Mendelsohn decomposition does not depend on the particular maximal matching used for its definition. 
We thus write 
\beq
\DM(F)&=&(\block_\underapprox,\block_\squared,\block_\overapprox)\,.
\label{etrgoijrtpkujyoi}
\eeq
\item System $F$ is structurally nonsingular if and only if the  overdetermined and underdetermined blocks $\block_\overapprox$ and $\block_\underapprox$ are both empty.
\end{enumerate}
\end{lemma}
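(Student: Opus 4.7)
The plan is to establish Statement~\ref{erpoigfehpguio} (matching-independence of the DM decomposition) via a symmetric-difference argument on maximum matchings, and then derive the second statement as an immediate consequence.

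For the first statement, let $\cM$ and $\cM'$ be two matchings of maximum cardinality for $\cG_F$. I would first note that $\cM \triangle \cM'$ decomposes into vertex-disjoint alternating paths and cycles, and by Berge's theorem (no augmenting path exists relative to a maximum matching) each such path must contain equal numbers of $\cM$- and $\cM'$-edges. This already forces the set of unmatched equation-side vertices under $\cM$ to coincide with that under $\cM'$, and likewise for variable-side vertices.

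The hard part will be showing that the alternating-reachability set itself is matching-independent. My plan is to establish the following intrinsic characterization: an equation-side vertex $v$ lies in $\block_\overapprox$ if and only if \emph{some} maximum matching leaves $v$ unmatched. The forward direction uses the classical switching trick: given an $\cM$-alternating path $P$ from an unmatched equation $f$ to $v$, the symmetric difference $\cM \triangle P$ is again a matching of the same cardinality (the parity of $P$, with both endpoints on the equation side, ensures the matched/non-matched edge counts along $P$ balance) and leaves $v$ unmatched. The reverse direction follows by analyzing the connected component of $v$ in $\cM \triangle \cM'$: if $v$ is unmatched in some maximum matching $\cM'$ but matched in $\cM$, this component is an alternating path starting at $v$ with its $\cM$-edge, and the parity argument from the first step forces its other endpoint to be unmatched in $\cM$, yielding the desired $\cM$-alternating path. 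A symmetric argument applies to $\block_\underapprox$ via unmatched variables, and $\block_\squared$ is then the complement, so the triple $\DM(F)$ is intrinsically defined.

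For the second statement, if $F$ is structurally nonsingular then Definition~\ref{erfuilehui} supplies a complete matching, which is necessarily of maximum cardinality and leaves no vertex unmatched, so both $\block_\overapprox$ and $\block_\underapprox$ are empty by construction. Conversely, if these two blocks are empty, then the characterization above implies that no maximum matching leaves any equation or variable unmatched; any such matching is therefore complete, which by Definition~\ref{erfuilehui} witnesses structural nonsingularity.
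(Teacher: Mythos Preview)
The paper does not actually give a proof of this lemma; it is stated as a standard fact from the Dulmage--Mendelsohn literature (the reference to~\cite{PothenF90} in Definition~\ref{elrftuierhfperui}), so there is nothing to compare against on the paper's side. I will therefore assess your argument on its own.

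There are two issues.

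First, a false side claim. You assert that because every path component of $\cM\triangle\cM'$ carries equally many $\cM$- and $\cM'$-edges, the \emph{sets} of $\cM$-unmatched and $\cM'$-unmatched equation vertices coincide. They need not. Take two equations $e_1,e_2$, one variable $x$, and edges $(e_1,x),(e_2,x)$: the maximum matchings $\cM=\{(e_1,x)\}$ and $\cM'=\{(e_2,x)\}$ leave $e_2$ and $e_1$ unmatched, respectively. What the parity argument gives you is only that these sets have equal cardinality and that the two endpoints of each path lie on the \emph{same} side of the bipartition. Fortunately you do not actually use this claim later, so it can simply be deleted.

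Second, a genuine gap in coverage. Your intrinsic characterization ``$v$ is left unmatched by some maximum matching'' is proved only for the equation vertices of $\block_\overapprox$ and, by symmetry, the variable vertices of $\block_\underapprox$. But $\block_\overapprox$ also contains variable vertices and $\block_\underapprox$ also contains equation vertices, and your sentence ``$\block_\squared$ is then the complement'' tacitly assumes you have already pinned down both blocks in full. As written, you have not shown that the variable part of $\block_\overapprox$ (or the equation part of $\block_\underapprox$) is matching-independent. The fix is short: once the equation side of $\block_\overapprox$ is known to be intrinsic, observe that a variable $x$ lies in $\block_\overapprox$ (w.r.t.\ $\cM$) if and only if $x$ is adjacent to some equation in that equation side; indeed, any $\cM$-alternating path from an unmatched equation to $x$ has odd length and passes last through such an equation, and conversely any edge from such an equation to $x$ extends an alternating walk reaching $x$. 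This neighbourhood description is matching-free, so the variable part of $\block_\overapprox$ is intrinsic as well; the dual argument handles the equation part of $\block_\underapprox$. You should also record that $\block_\overapprox\cap\block_\underapprox=\emptyset$ (a vertex in both would yield an $\cM$-augmenting path), so that $\block_\squared$ really is the complement.

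Your argument for Statement~2 is fine and does not even require the intrinsic characterization: if $\block_\overapprox=\emptyset$ for a given maximum matching $\cM$, then in particular every equation is $\cM$-matched (an unmatched equation is trivially in $\block_\overapprox$), and similarly every variable is $\cM$-matched; hence $\cM$ is complete.
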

A refined DM decomposition exists, which in addition refines block $\block_\squared$ into its indecomposable block triangular form. This is a direct consequence of applying Lemma~\ref{jytdfeoguip} to $\block_\squared$ in order to get its BTF.
The following lemma is useful in the structural analysis of multimode DAE systems~\cite{BENVENISTE2020,DBLP:journals/corr/abs-2008-05166}:
\begin{lemma}
\label{togiethgouiho} Let $\DM({F})=(\block_\underapprox,\block_\squared,\block_\overapprox)$ be the Dulmage-Mendelsohn decomposition of $F=0$. Then, no overdetermined block exists in the Dulmage-Mendelsohn decomposition of ${F}\setminus\block_\overapprox$.
\end{lemma}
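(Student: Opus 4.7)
The plan is to exploit the matching-based definition of the Dulmage-Mendelsohn decomposition: if I can exhibit a maximum matching of $\cG_{F\setminus\block_\overapprox}$ under which no equation is unmatched, then, since the overdetermined block is by definition built from alternating paths starting at unmatched equations, that block must be empty. Statement~\ref{erpoigfehpguio} of Lemma~\ref{erlihpfiu} will then propagate this matching-specific conclusion to $\DM(F\setminus\block_\overapprox)$ itself.

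I would start from an arbitrary maximum matching $\cM$ of $\cG_F$, defining $\DM(F)=(\block_\underapprox,\block_\squared,\block_\overapprox)$, and establish three intermediate facts. First, every unmatched equation lies in $\block_\overapprox$, because the trivial alternating path of length zero already witnesses reachability from itself; consequently, all equations in $\block_\underapprox\cup\block_\squared$ are $\cM$-matched. Second, each matching edge is internal to a single block: if $(e,v)\in\cM$ and $v\in\block_\overapprox$, then any alternating path witnessing $v\in\block_\overapprox$ must enter $v$ through its matching edge (since $v$ is matched), namely through $(e,v)$, whence $e\in\block_\overapprox$; the converse direction is symmetric. Third, variables of $\block_\overapprox$ cannot appear in equations outside $\block_\overapprox$: any incident non-matching edge could be appended to the alternating path reaching such a variable, carrying the incident equation into $\block_\overapprox$ as well. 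This last fact is the one that makes $F'\deff F\setminus\block_\overapprox$ meaningful as a subsystem, since it guarantees that $\cG_{F'}$ is exactly $\cG_F$ restricted to $\block_\squared\cup\block_\underapprox$ on both sides.

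With these facts in hand, I would take $\cM'$ to be the restriction of $\cM$ to edges with both endpoints in $\block_\squared\cup\block_\underapprox$; fact (ii) shows this yields a matching of $\cG_{F'}$, and facts (i)--(ii) together show that every equation of $F'$ is $\cM'$-matched. Any $\cM'$-augmenting path in a bipartite graph joins an unmatched variable to an unmatched equation, so the absence of unmatched equations forbids augmenting paths, and $\cM'$ is maximum. Definition~\ref{elrftuierhfperui} applied to $F'$ with $\cM'$ gives an empty overdetermined block, and Lemma~\ref{erlihpfiu}(\ref{erpoigfehpguio}) transfers the conclusion to $\DM(F')$.

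The main point requiring care is fact (iii), and more generally the asymmetry of the decomposition: equations in $\block_\overapprox$ may be unmatched whereas its variables must be matched, so both the matching-edge and non-matching-edge cases must be checked to rule out any leakage between $\block_\overapprox$ and its complement. Once that bookkeeping is in place, the remainder of the argument is essentially immediate from the definitions of maximum matching and of the overdetermined block.
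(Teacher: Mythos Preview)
The paper states this lemma without proof. Your overall strategy is sound: restrict a maximum matching to the complement of $\block_\overapprox$, check it still covers every remaining equation, conclude it is maximum by absence of augmenting paths, hence the overdetermined block is empty. However, the parity of your alternating paths is off, and this breaks fact~(iii).

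An alternating path from an unmatched equation has a non-matching first edge, so by alternation every \emph{variable} on such a path is entered via a non-matching edge and every \emph{equation} (past the first) via a matching edge---not the other way round. Your justification of fact~(ii) has this reversed; the conclusion is still correct, but the right argument is to \emph{extend} the path reaching $v$ by the matching edge $(v,e)$. More seriously, fact~(iii) is false: since variables are reached via non-matching edges, appending another non-matching edge to an adjacent equation is not alternating. Concretely, let $e_0,e_1$ involve only $v_1$ and $e_2$ involve $v_1,v_2$; with $\cM=\{(e_0,v_1),(e_2,v_2)\}$ one obtains $\block_\overapprox=\{e_0,e_1,v_1\}$, yet $v_1$ appears in $e_2\notin\block_\overapprox$. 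The valid direction is the reverse: equations of $\block_\overapprox$ involve only variables of $\block_\overapprox$. Fortunately your main line does not actually need fact~(iii): $\cM'$ covers every equation of $F'$ by facts~(i) and~(ii) alone, which already forbids augmenting paths and forces the overdetermined block of $F'$ to be empty, regardless of whether $\cG_{F'}$ picks up stray variables from $\block_\overapprox$ as additional unmatched variable nodes.
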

\begin{ccomment}\rm 
	\label{esrltuiophu} 
	We could refine this result by only removing unmatched equations with reference to the matching of maximal cardinality used for generating $\DM({F})$. Then, denoting by $F'$ the remaining subsystem of $F$, we would still get that no overdetermined block exists in the Dulmage-Mendelsohn decomposition of $F'$, and $F'\supseteq({F}\setminus\block_\overapprox)$ (the inclusion being strict in general). However, this policy depends on the particular matching. In contrast, the rule defined by Lemma~\ref{togiethgouiho} is independent from the matching. This Lemma is used in~\cite{BENVENISTE2020,DBLP:journals/corr/abs-2008-05166} to generate restart code at mode changes, see Section~\ref{ujydtfweutyf} of this report; the independence with respect to the particular matching considered was crucial to ensure the determinism of code generation.
%
\end{ccomment}

 \subsection{Practical use of structural analysis}
We distinguish between two different uses of structural analysis. The classical use assumes that we already have a close guess of a solution, we call it the \emph{local use.} By contrast, the other use is called \emph{global.} We explain both here.

\subsubsection{Local use of structural analysis} 
If a square matrix is structurally singular, then it is singular. The converse is false: structural nonsingularity does not guarantee nonsingularity. Therefore, the practical use of Definition\,\ref{erfuilehui} is as follows: We first check if $F=0$ is structurally nonsingular. If not, then we abort searching for a solution. Otherwise, we then proceed to computing a solution, which may or may not succeed depending on the actual numerical regularity of the Jacobian matrix $\Jacobian_{\!X}F$.

Structural analysis relies on the Implicit Function Theorem for its justification, as stated at the beginning of this section. Therefore, the classical use of structural analysis is local: let $F(Y,X)=0$ be a structurally nonsingular system of equations with dependent variables $X$, and let $\vval{X}$ satisfy $F(\vval{Y},\vval{X})=0$ for a given value $\vval{Y}$ for $Y$; then, the system remains (generically) nonsingular if $\vval{}$ varies in a neighborhood of $\vval{Y}$. This is the situation encountered in the running of ODE or DAE solvers, since a close guess of the solution is known from the previous time step.

\subsubsection{Non-local use of structural analysis}
\label{wlerifuhwlpi}
We are, however, also interested in invoking the structural nonsingularity of system $F(Y,X)=0$ when no close guess is known. This is the situation encountered when handling mode changes of multimode DAE systems~\cite{BENVENISTE2020,DBLP:journals/corr/abs-2008-05166}. We thus need another argument to justify the use of structural analysis in this case.

As a prerequisite, we recall basic definitions regarding smooth manifolds (see, e.g.,~\cite{Cartan}, Section 4.7). In the next lemma, we consider the system $F(Y,X)=0$ for a fixed value of $Y$, thus $Y$ is omitted. 
\begin{lemma}
	\label{rleiufheliru} Let $k,m,n\in\bN$ be such that $m<{n}$ and set $p=n-m$. For $\cS\subset\bR^n$ and $x^*=(x^*_1,\dots,x^*_n)\in\cS$, the following properties are equivalent:
\begin{enumerate}
	\item \label{lreiuhleuirh} There exists an open neighborhood $V$ of $x^*$ and a $\cC^k$-diffeomorphism $F:V\ra{W}{\subset}\bR^n$ such that $F(x^*)=0$ and $F(\cS\cap{V})$ is the intersection of $W$ with the subspace of $\bR^n$ defined by the $m$ equations $w_{p+1}=0,\dots,w_n=0$, where $w_i$ denote the coordinates of points belonging to $W$.
	
	\item  \label{weuygweouertpoi} There exists an open neighborhood $V$ of $x^*$, an open neighborhood $U$ of $0$ in $\bR^p$ and a homeomorphism $\psi:U\rightarrow\cS\cap{V}$, such that $\psi$, seen as a map with values in $\bR^n$, is of class $\cC^k$ and has rank $p$ at $0$---i.e., $\psi'(0)$ has rank $p$.
\end{enumerate}
\end{lemma}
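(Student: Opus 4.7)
The plan is to prove both implications by direct application of the Inverse Function Theorem, following the classical template of the Rank Theorem: property~(1) describes a \emph{straightening chart} for $\cS$ near $x^*$, while property~(2) describes a \emph{local parametrization}, and each construction can be recovered explicitly from the other.

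For (1)\,$\Rightarrow$\,(2), I would compose $F^{-1}$ with the canonical embedding of $\bR^p$ into $\bR^n$ as the first $p$ coordinates. Concretely, choose a product neighborhood $U\times U'\subset W$ of the origin with $U\subset\bR^p$ and $U'\subset\bR^m$, shrink $V$ to $F^{-1}(U\times U')$, and set $\psi(u) = F^{-1}(u,0)$. This map is $\cC^k$ as a composition, satisfies $\psi(0)=x^*$, and is a homeomorphism from $U$ onto $\cS\cap V$ because $F$ restricts to a diffeomorphism $V\to U\times U'$ sending $\cS\cap V$ onto $U\times\{0\}$. Finally, $\psi'(0)$ is the restriction of the invertible linear map $(F^{-1})'(0)$ to the $p$-dimensional subspace $\bR^p\times\{0\}$, hence has rank $p$.

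For (2)\,$\Rightarrow$\,(1), the natural move is to extend $\psi$ to an $\bR^n$-valued map whose total derivative at $0$ becomes invertible. Since $\psi'(0)$ has rank $p$, after a linear permutation of the coordinates of $\bR^n$ I may assume that the first $p$ rows of the $n\times p$ matrix $\psi'(0)$ form an invertible $p\times p$ block. Writing coordinates in the target as $(a,b)\in\bR^p\times\bR^m$, I define $\tilde{\psi}:U\times\bR^m\to\bR^n$ by $\tilde{\psi}(u,v) = \psi(u) + (0,v)$. Its derivative at $(0,0)$ is block lower triangular with diagonal blocks equal to the invertible $p\times p$ minor of $\psi'(0)$ and the identity $I_m$, hence invertible. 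The Inverse Function Theorem then provides a $\cC^k$-diffeomorphism from a neighborhood of $(0,0)$ onto a neighborhood of $x^*$; I take $F$ to be its inverse, composed with the translation sending $x^*$ to $0$.

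The main subtlety lies in this second implication: one must verify that $F(\cS\cap V)$ equals the linear slice $W\cap\{w_{p+1}=\dots=w_n=0\}$, not merely that one is contained in the other. One inclusion is built into the construction, since $F(\psi(u)) = (u,0)$ for every $u\in U$. The reverse inclusion is precisely where the hypothesis that $\psi$ is a \emph{homeomorphism} onto $\cS\cap V$, rather than merely an injective immersion, becomes indispensable: it rules out the existence of points of $\cS$ accumulating near $x^*$ that are not reached by $\psi$, a phenomenon which can occur for immersions whose image fails to be locally closed. Using the continuity of $\psi^{-1}$, I can shrink $V$ so that $\cS\cap V = \psi(U)$, and the required equality then follows, closing the proof.
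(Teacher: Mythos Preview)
Your proof is correct and follows the standard template for establishing the equivalence between the ``level-set'' and ``parametrization'' characterizations of a $\cC^k$-submanifold; both directions are handled cleanly, and you rightly flag the one delicate point in (2)$\Rightarrow$(1), namely that the homeomorphism hypothesis on $\psi$ is what guarantees the \emph{equality} $F(\cS\cap V)=W\cap\{w_{p+1}=\dots=w_n=0\}$ rather than a mere inclusion.

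As for comparison with the paper: the paper does not actually prove this lemma. It is introduced with the sentence ``As a prerequisite, we recall basic definitions regarding smooth manifolds (see, e.g.,~[Cartan], Section~4.7)'' and is stated without proof as a classical fact from differential geometry, then immediately used to derive Lemma~\ref{erkltuiherliu}. So your write-up supplies an argument where the paper simply defers to a textbook reference; there is nothing to compare approach-wise beyond noting that your proof is exactly the kind of Rank-Theorem argument one finds in the cited source.
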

Statement\,\ref{lreiuhleuirh} means that $\cS$ is, in a neighborhood of $x^*$, the solution set of the system of equations $f_1(X)=0,\dots,f_m(X)=0$ in the $n$-tuple of dependent variables $X$, where $F=(f_1,\dots,f_m)$. Statement\,\ref{weuygweouertpoi} expresses that $\psi$ is a parameterization of $\cS$ in a neighborhood of $x^*$, of class $\cC^k$ and rank $p$---meaning that $\cS$ has dimension $p$ in a neighborhood of $x^*$.

{In order to apply Lemma\,$\ref{rleiufheliru}$ to the non-local use of structural analysis, we have to study the case of square systems of equations (i.e., let $m = n$).}
Consider a system $F(X)=0$, where $X=(x_1,\dots,x_n)$ is the $n$-tuple of dependent variables and $F=(f_1,\dots,f_n)$ is an $n$-tuple of functions $\bR^n\rightarrow\bR$ of class $\cC^k$. Let $\cS\subseteq\bR^n$ be the solution set of this system, {that we assume is non-empty}. To be able to apply Lemma\,\ref{rleiufheliru}, we augment $X$ with one extra variable $z$, i.e., we set $Z\eqdef{X}\cup\{z\}$ and we now regard our formerly square system $F(X)=0$ as an augmented system $G(Z)=0$ where $G(X,z)\eqdef{F}(X)$. Extended system $G$ possesses $n$ equations and $n{+}1$ dependent variables, hence, $p=1$,  and its solution set is equal to $\cS\times\bR$. 
Let $x^* \in \cS$ be a solution of $F=0$. Then, we have $G(x^*,z^*)=0$ for any $z^*\in\bR$. Assume further that the Jacobian matrix $\Jacobian_{\!X}F(x)$ is nonsingular at $x^*$. Then, it remains nonsingular in a neighborhood of $x^*$. Hence, the Jacobian matrix $\Jacobian_{\!X}G(x,z)$ has rank $n$ in a neighborhood $V$ of $(x^*,z^*)$ in $\bR^{n+1}$. We can thus extend $G$ by adding one more function in such a way that the resulting $(n{+}1)$-tuple $\bar{G}$ is a $\cC^k$-diffeomorphism, from $V$ to an open set $W$ of $\bR^{n+1}$. The extended system $\bar{G}=0$ satisfies Property\,\ref{lreiuhleuirh} of Lemma\,\ref{rleiufheliru}. By Property\,\ref{weuygweouertpoi} of Lemma\,\ref{rleiufheliru}, $(\cS\times\bR)\cap{V}$ has dimension $1$, implying that $\cS\cap\proj{\bR^n}{V}$ has dimension $<1$. This analysis is summarized in the following lemma:
\begin{lemma}
	\label{erkltuiherliu}  Consider the square system $F(X)=0$, where $X=(x_1,\dots,x_n)$ is the $n$-tuple of dependent variables and $F=(f_1,\dots,f_n)$ is an $n$-tuple of functions $\bR^n\rightarrow\bR$ of class $\cC^k, k{\geq}1$. Let $\cS\subseteq\bR^n$ be the solution set of this system. Assume that system $F=0$ has solutions and let $x^*$ be such a solution. Assume further that the Jacobian matrix $\Jacobian_{\!X}F(x)$ is nonsingular at $x^*$. Then, there exists an open neighborhood $V$ of $x^*$ in $\bR^n$, such that $\cS\cap{V}$ has dimension $<1$.
\end{lemma}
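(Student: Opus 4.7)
The plan is to formalize the sketch that precedes the statement: augment the square system to an underdetermined one so that Lemma~\ref{rleiufheliru} becomes applicable, and then push the resulting dimension count back to $\cS$.

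First, I would set $Z \eqdef X \cup \{z\}$ for a fresh scalar variable $z$, and define $G(X,z) \eqdef F(X)$. The augmented system $G(Z)=0$ has $n$ equations in $n+1$ unknowns, with solution set $\cS \times \bR$; for any chosen $z^* \in \bR$ the pair $(x^*,z^*)$ solves $G=0$. Since $\Jacobian_{\!X}F(x^*)$ is nonsingular, by continuity of the determinant it remains nonsingular on some open $V_0 \subseteq \bR^n$ containing $x^*$, and hence the $n \times (n{+}1)$ Jacobian $\Jacobian_{\!Z} G$ has rank $n$ on $V_0 \times \bR$.

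Second, I would complete $G$ to a square map by prepending the auxiliary function $z - z^*$, defining $\bar G \eqdef (z - z^*,\, f_1,\,\dots,\,f_n)$. The Jacobian $\Jacobian_{\!Z}\bar G(x^*,z^*)$ is block-triangular, with diagonal blocks the scalar $1$ and $\Jacobian_{\!X}F(x^*)$, and is therefore nonsingular. By the Inverse Function Theorem there exist open neighborhoods $V \subseteq V_0 \times \bR$ of $(x^*,z^*)$ and $W \subseteq \bR^{n+1}$ of $0$ such that $\bar G: V \to W$ is a $\cC^k$-diffeomorphism. Since the last $n$ coordinates of $\bar G$ are the components of $F$, the set $(\cS \times \bR) \cap V$ is the preimage under $\bar G$ of $W \cap \{w_2 = \dots = w_{n+1} = 0\}$. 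This matches condition~\emph{1} of Lemma~\ref{rleiufheliru} applied with ambient dimension $n+1$, number of equations $n$, and $p = 1$.

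Third, the equivalence in Lemma~\ref{rleiufheliru} then yields a $\cC^k$-parameterization $\psi: U \to (\cS \times \bR) \cap V$ of rank $1$ at $0$, where $U$ is an open neighborhood of $0$ in $\bR$. Hence $(\cS \times \bR) \cap V$ is locally a $1$-dimensional smooth submanifold of $\bR^{n+1}$. Shrinking $V$ to a product box, $(\cS \times \bR) \cap V$ decomposes as $(\cS \cap \proj{\bR^n}{V}) \times I$ for some open interval $I \ni z^*$, which forces $\cS \cap \proj{\bR^n}{V}$ to have dimension strictly less than $1$.

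The delicate point is the augmentation trick: Lemma~\ref{rleiufheliru} applies only to strictly underdetermined systems, so one must introduce the dummy variable $z$ to create exactly one extra degree of freedom and then argue that this single extra degree of freedom accounts for all of the dimension produced by the lemma, so that $\cS$ itself is at most zero-dimensional near $x^*$. The remaining ingredients -- continuity of the determinant, the Inverse Function Theorem, and the block-triangular form of $\Jacobian \bar G$ -- are routine.
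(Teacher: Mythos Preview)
Your argument is correct and follows essentially the same route as the paper's: augment with a dummy variable $z$, extend $G$ to a local diffeomorphism $\bar G$, invoke Lemma~\ref{rleiufheliru} with $p=1$, and project back. You are merely more explicit than the paper on two points---you name the extra function concretely as $z-z^*$ (the paper only says ``adding one more function''), and you spell out the product-box shrinking that justifies passing from $\dim\bigl((\cS\times\bR)\cap V\bigr)=1$ to $\dim\bigl(\cS\cap\proj{\bR^n}{V}\bigr)<1$, a step the paper leaves implicit.
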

{The implications for the non-local use of structural analysis are as follows.} Consider the square system $F(X)=0$, where $X=(x_1,\dots,x_n)$ is the $n$-tuple of dependent variables and $F=(f_1,\dots,f_n)$ is an $n$-tuple of functions $\bR^n\rightarrow\bR$ of class $\cC^k$. 
We first check the structural nonsingularity of $F=0$. If structural nonsingularity does not hold, then we abort solving the system. Otherwise, we proceed to solving the system and the following cases can occur:
\begin{enumerate}
	\item \label{epriuthpeo} the system possesses no solution;
	
	\item \label{kerufygeouy} its solution set is nonempty, of dimension $<1$; or
	
	\item \label{leriughpoiuh} its solution set is nonempty, of dimension $\geq 1$.
\end{enumerate}
Based on Lemma\,\ref{erkltuiherliu}, one among cases\,\ref{epriuthpeo} or\,\ref{kerufygeouy} will hold, generically, whereas case\,\ref{leriughpoiuh} will hold exceptionally. Thus, structurally, we know that the system is not underdetermined. 
This is our justification of the non-local use of structural analysis. Note that the existence of a solution is not guaranteed. Furthermore, unlike for the local use, uniqueness is not guaranteed either. {{Lemma\,\ref{erkltuiherliu} only states that there cannot be two arbitrarily close solutions of $F=0$, as its conclusion $\cS \cap V = \{ x^* \}$ involves an open neighborhood $V$ of $x^*$.}} Of course, subclasses of systems for which existence and uniqueness are guaranteed are of interest.

\subsection{Equations with existential quantifiers
\label{sec:operiuhlo}}
To support the structural analysis for the method of \emph{differential arrays} advocated by Campbell and Gear~\cite{CampbellGear1995} (see also Section~\ref{lrtioghprtjpi} of this report), we will need to develop a structural analysis for the following class of (possibly non-square) algebraic systems of equations with existential quantifier:\footnote{To our knowledge, the results of this section are not part of the folklore of sparse matrix algebra.}
	  \beq
	 \exists{W}:F(X,W,Y)=0 ~ ,
	 \label{elrifuhuio}
	 \eeq
	  where $(X,W)$ collects the dependent variables of system $F(X,W,Y)=0$. $Y$ and $X$ are as before, and the additional tuple $W$ collects supplementary variables, of no interest to us, whence their elimination by existential quantification.
Our aim is to find structural conditions ensuring that (\ref{elrifuhuio}) defines a partial function $Y\ra{X}$, meaning that the value of tuple $X$ is uniquely defined by the satisfaction of (\ref{elrifuhuio}), given a value for tuple $Y$.
	   
	   At this point, a fundamental difficulty arises. Whereas (\ref{elrifuhuio}) is well defined as an abstract relation, it cannot in general be represented by a projected system of smooth algebraic equations of the form $G(X,Y)=0$. Such a $G$ can be associated to (\ref{elrifuhuio}) only in subclasses of systems.\footnote{Examples are linear systems for which elimination is easy, and polynomial systems for which it is doable---but expensive---using Gr\"obner bases.} In general, no extension of the Implicit Function Theorem exists for systems of the form (\ref{elrifuhuio}); thus, one cannot apply as such the arguments developed in Section~\ref{wlerifuhwlpi}. Therefore, we will reformulate our requirement differently.
	   Say that 
		\beq
		\mbox{
		\begin{minipage}{10cm}
			 $\vval{Y}$ is \emph{consistent} if the system of equations $F(X,W,\vval{Y})=0$ possesses a solution for $(X,W)$.
		\end{minipage}
		} \label{elrgihsuilpn}
		\eeq
		Consider the following property for the systen $F(X,W,Y)=0$: for every consistent $\vval{Y}$,
	   \beq
	  \left.\bea{l}
	  F(\vval{X}^1,\vval{W}^1,\vval{Y})=0 \\ [1mm]
	  F(\vval{X}^2,\vval{W}^2,\vval{Y})=0
	  \eea\right\}\implies& \vval{X}^1=\vval{X}^2 ~~ , 
	   \label{lriuyokiyugti}
	   \eeq
	   expressing that $X$ is independent of $W$ given a consistent tuple of values for $Y$. 
	   To find structural criteria guaranteeing (\ref{lriuyokiyugti}), we will consider the algebraic system $F(X,W,Y)=0$ with $X,W,Y$ as dependent variables (i.e., values for the entries of $Y$ are no longer seen as being given).
\begin{definition}
	\label{lerigfueriopg} System $(\ref{elrifuhuio})$ is \emph{structurally nonsingular} if the following two conditions hold, almost everywhere when the non-zero coefficients of the Jacobian matrix $\Jacobian_{X,W,Y}F$ vary over some neighborhood:\footnote{The precise meaning of this statement is the same as in Lemma~\ref{oerigtuerio}, see the discussion therafter.} 
	\begin{itemize}
	\item[$\bullet$] consistent values for $Y$ exist;
	\item[$\bullet$] condition $(\ref{lriuyokiyugti})$ holds.
	\end{itemize}
\end{definition}
The structural nonsingularity of (\ref{elrifuhuio}) can be checked by using the DM decomposition of $F$ as follows. Let $(\block_\underapprox,\block_\squared,\block_\overapprox)=\DM(F)$ be the DM decomposition of $F(X,W,Y)=0$, with $(X,W,Y)$ as dependent variables. We further assume that the regular block $\block_\squared$ is expressed in its block triangular form, see Lemma~\ref{jytdfeoguip} and comments thereafter. Let $\bfB$ be the set of all indecomposable blocks of $\block_\squared$ and $\preceq$ be the partial order on $\bfB$ following Lemma~\ref{jytdfeoguip}. The following holds:
\begin{lemma}
	\label{lergfuioerhpuip} System $(\ref{elrifuhuio})$ is \emph{structurally nonsingular} if and only if:
\begin{enumerate}
	\item \label{sguiholghouiy} Block $\block_\overapprox$ is empty;
	\item \label{odigpiufr} Block $\block_\underapprox$ involves no variable belonging to $X$;
	\item \label{lpiguhlosgioj} For every $\block\in\bfB$ containing some variable from $X$, then, $\block$ contains no variable from $W$, and for every $\block'{\prec}\block$ and every directed edge $(z',{z}){\in}\vec{\cG}^\cM_F$ such that $z'{\in}\block'$, $z{\in}\block$, and $\cM$ is an arbitrary complete matching for $\cG_F$, then 	$z'\not\in{W}$.
\end{enumerate}
\end{lemma}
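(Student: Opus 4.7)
The plan is to prove both directions by exploiting the block-triangular structure of $\block_\squared$ and reducing each local inference to Lemma~\ref{oerigtuerio} applied to the individual indecomposable blocks. A preliminary structural fact that I rely on throughout is the classical DM property that variables of $\block_\underapprox$ do not appear in any equation of $\block_\squared$: any such cross-block edge would be unmatched in every maximum matching, and prepending an alternating path from an unmatched variable would force the endpoint equation into $\block_\underapprox$; symmetrically, variables of $\block_\squared$ do not appear in the equations of $\block_\overapprox$. Under the same token, within the BTF of $\block_\squared$, any incidence between a variable of one indecomposable block and an equation of another is necessarily unmatched, because each variable of an indecomposable block is matched within that block.

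For the sufficiency direction, I linearize the partial order $\preceq$ on $\bfB$ into $\block_1,\dots,\block_N$ and argue block by block. Existence of a consistent $\vval{Y}$ (the first clause of Definition~\ref{lerigfueriopg}) follows from condition~\ref{sguiholghouiy}: with $\block_\overapprox$ empty, one freely assigns values to the $\block_\underapprox$-variables and invokes Corollary~\ref{fq09t7ghouiwo} via Lemma~\ref{oerigtuerio} on each $\block_i$ in topological order, generically producing a full solution of $F=0$, whose $Y$-coordinates are consistent by construction. For the uniqueness requirement~(\ref{lriuyokiyugti}), I induct along the subfamily $\bfB_X\subseteq\bfB$ of blocks meeting $X$. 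Condition~\ref{lpiguhlosgioj}, combined with the preliminary observation on cross-block edges, guarantees that the equations of any $\block\in\bfB_X$ involve no $W$-variable at all: not from $\block$ itself (first clause of~\ref{lpiguhlosgioj}), not from any predecessor in $\bfB$ (second clause, since the only possible cross-block direction is variable to equation), and not from $\block_\underapprox$ (the DM fact recalled above). Solving $\block$ via Lemma~\ref{oerigtuerio} therefore expresses its $X$-entries as a function of $\vval{Y}$ and of already-determined $X$-entries only; propagating this along the induction yields $\vval{X}^1=\vval{X}^2$ for any two solutions sharing a common $\vval{Y}$.

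For necessity, I negate the three conditions in turn and exhibit the corresponding generic obstruction. If $\block_\overapprox\neq\emptyset$, then its sub-system is structurally overdetermined on its own variable set, so for generic Jacobian coefficients its zero locus is empty and no consistent $\vval{Y}$ can exist. If some $x\in X$ lies in $\block_\underapprox$, then $x$ is a genuinely free coordinate of the solution set (the underdetermined block has positive dimension in $x$), and perturbing it yields two solutions sharing a common $\vval{Y}$ but with distinct $X$, contradicting~(\ref{lriuyokiyugti}). Finally, if condition~\ref{lpiguhlosgioj} is violated, the offending indecomposable block $\block\in\bfB_X$ couples $X$ with $W$, either internally or through a $W$-input from a strictly preceding block; for fixed predecessor data, $\block$ then typically admits several isolated joint solutions in $(X,W)$, which one wants to project onto distinct $X$-values, again violating~(\ref{lriuyokiyugti}).

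The main obstacle I anticipate is this last step: one must argue that a \emph{structural} coupling between $X$ and $W$ inside a single square block really does produce, for generic coefficients, solutions whose $X$-components differ, not merely their $W$-components. My strategy is to treat the coefficients of the $W$-entries in the equations of $\block$ as deformation parameters and to apply a transversality or dimension-counting argument showing that the projection onto $X$ of $\block$'s solution fiber above fixed predecessor data is generically non-injective; equivalently, one can exhibit a local normal form for $\block$ at a given solution from which such non-injectivity can be read off. Once this point is settled, the remaining pieces of the proof are routine bookkeeping on the DM and BTF decompositions.
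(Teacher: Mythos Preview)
Your approach is essentially the same as the paper's: for sufficiency you process the indecomposable blocks of $\block_\squared$ in topological order, showing that each block meeting $X$ has the form $G(Z,U)=0$ with $Z\subseteq X$ and $U\subseteq X\cup Y$, so that its $X$-entries are determined by $\vval{Y}$ and previously fixed $X$-entries; for necessity you case-split on which of the three conditions fails. This matches the paper's proof exactly, and your DM bookkeeping about cross-block edges is in fact more explicit than the paper's.

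The point you flag as the ``main obstacle''---that a structural $X$--$W$ coupling inside an indecomposable block of $\bfB_X$ must generically produce distinct $X$-projections---is treated by the paper at precisely the same level of detail you worry about: it simply asserts that ``condition~(\ref{lriuyokiyugti}) will not hold for $x$ due to the structural coupling with $w$ through block $\block$'' (and similarly when $w$ enters through a predecessor block). No transversality or dimension-counting argument is spelled out; the claim is taken as the natural meaning of ``structural'' dependence in the sense of Definition~\ref{lerigfueriopg}. So your proposed extra work, while mathematically legitimate, goes beyond what the paper itself provides, and you need not carry it out to match the paper's standard of proof.
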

Conditions~\ref{sguiholghouiy} and~\ref{odigpiufr} speak by themselves. Regarding Condition~\ref{lpiguhlosgioj}, the intuition is the following. Every directed path of $\vec{\cG}^\cM_F$, originating from $W$ and terminating in $X$, of minimal length, must traverse $Y$. Consequently, $W$ influences $X$ ``through'' $Y$ only.
\begin{proof} We successively prove the if and then the only if parts.

\myparagraph{``If'' part} By Condition~\ref{sguiholghouiy}, there exist consistent values for $Y$. By condition~\ref{odigpiufr}, no variable of $X$ belongs to the underdetermined part of $\DM(F)$. It remains to show that condition $(\ref{lriuyokiyugti})$ holds. By condition (\ref{lpiguhlosgioj}), each indecomposable block $\block\in\bfB$ involving a variable of $X$ has the form $G(Z,U)=0$,
where 
\begin{enumerate}
	\item $Z$ is the $n$-tuple of dependent variables of $G=(g_1,\dots,g_n)$, 
	\item no variable of $W$ belongs to $Z$, and
	\item $U\subseteq{X}{\cup}Y$ is a subset of the dependent variables of the blocks $\block'$ immediately preceding $\block$.
\end{enumerate}
Since $G=0$ is structurally regular, fixing a value for $U$ entirely determines all the dependent variables of block $\block$. This proves the ``if'' part.

\myparagraph{``Only if'' part} We prove it by contradiction. If condition~\ref{sguiholghouiy} does not hold, then the existence of consistent values for $Y$ is not structurally guaranteed. If condition~\ref{odigpiufr} does not hold, then the variables of $X$ that belong to $\block_\underapprox$ are not determined, even for given values of $W,Y$. If condition~\ref{lpiguhlosgioj} does not hold, then two cases can occur:
\begin{itemize}
	\item Some $x\in{X}$ and $w\in{W}$ are involved in a same indecomposable block $\block\in\bfB$. Then, condition (\ref{lriuyokiyugti}) will not hold for $x$ due to the structural coupling with $w$ through block $\block$.
	\item There exists an indecomposable block $\block\in\bfB$  of the form $G(Z,U)=0$,
where 
\begin{enumerate} 
	\item $Z$ is the $n$-tuple of dependent variables of $G=(g_1,\dots,g_n)$, and $Z$ contains a variable $x\in{X}$,
	\item no variable of $W$ belongs to $Z$, and
	\item some variable $w\in{W}$ belongs to $U$, the set of the dependent variables of the blocks $\block'$ immediately preceding $\block$.
\end{enumerate}
\end{itemize}
Hence, $w$ influences $x$, structurally, which again violates (\ref{lriuyokiyugti}). This finishes the proof.
\end{proof}
\begin{algorithm}[ht]
	\caption{$\atomicact{ExistQuantifEqn}$} 
	\label{alg:erofiueopiu}
 \begin{algorithmic}[1]
\Require $F$;
\Return $(b_\overapprox,b_\underapprox,F_\Sigma,\consistency{F}_\Sigma)$
\State $(\block_\underapprox,\block_\squared,\block_\overapprox)=\DM(F)$
\If{condition~\ref{sguiholghouiy} of Lemma~\ref{lergfuioerhpuip} holds} 
\State $b_\overapprox\gets\ttt$;
\If{\ref{odigpiufr} and \ref{lpiguhlosgioj} of Lemma~\ref{lergfuioerhpuip} hold} 
\State $b_\underapprox\gets\ttt$;
  \State partition $\block_\squared=F_\Sigma\cup\consistency{F}_\Sigma$
\label{op:efuiehoi}
\Else
\State $b_\underapprox\gets\fff$
\EndIf
\Else
\State $b_\overapprox\gets\fff$
\EndIf
	\end{algorithmic}
\end{algorithm}
We complement Lemma~\ref{lergfuioerhpuip} with the algorithm $\atomicact{ExistQuantifEqn}$ (\rref{alg:erofiueopiu}), which
requires a system $F$ of the form (\ref{elrifuhuio}). If condition~\ref{sguiholghouiy} of Lemma~\ref{lergfuioerhpuip} fails to be satisfied, then $b_\overapprox\gets\fff$ is returned, indicating overdetermination. If conditions~\ref{odigpiufr} or \ref{lpiguhlosgioj} of Lemma~\ref{lergfuioerhpuip} fail to be satisfied, then $b_\underapprox\gets\fff$ is returned, indicating underdetermination. Otherwise, $\atomicact{ExistQuantifEqn}$ succeeds and returns the value $\ttt$ for both Booleans, together with
the decomposition $F_\Sigma\cup\consistency{F}_\Sigma$ of $\block_\squared$. In this decomposition:
\begin{itemize}
\item Subsystem $F_\Sigma$ collects the indecomposable blocks involving variables belonging to $X$, so that ${F}_\Sigma$ determines $X$ as a function of $\vval{Y}$ when $\vval{Y}$ is consistent;
\item Subsystem $\consistency{F}_\Sigma$ collects the consistency conditions, whose dependent variables belong to $W\cup{Y}$.
\end{itemize}
Our background on the structural analysis of algebraic equations is now complete. 
In the next section, we recall the background on the structural analysis of (single-mode) DAE systems.

\section{Structural analysis of DAE systems}
\label{lfdkuvekrughluh}
In this section, we consider DAE systems of the form 
\beq
f_j(x_i \mbox{'s and derivatives}) = 0\,,
\label{sepguiohp}
\eeq
where $x_1,\dotsc,x_m$ denote the dependent signal variables (their valuations are trajectories) and $f_1=0,\dotsc,f_n=0$ denote the equations---with reference to notation (\ref{eq:F}), we omit the input signals collected in $Y$.
By analogy with DAEs, we use the acronym dAE to mean \emph{difference Algebraic Equations,} which define discrete-time dynamical systems of the form 
\beq
f_j(x_i \mbox{'s and  shifts}) = 0\,.
\label{ltrgiourhtuil}
\eeq
where $x_1,\dotsc,x_m$ denote the dependent stream variables (their valuations are data streams, which can be regarded as real sequences), $f_1=0,\dotsc,f_n=0$ denote the equations, and,
for $x=\{x_n\mid{n}=1,2,\dots\}$ a stream of variables, its successive \emph{shifts} $\ppostset{k}{x}$, where $k$ is a nonnegative integer, are defined by 
\beq
\ppostset{k}{x}_n \eqdef x_{n+k} \enspace ; 
&\mbox{we write for short}&  \postset{x} \eqdef \ppostset{1}{x}\,.
\label{lrigkuhkjjhvahtr}
\eeq
Also, the notation $\pprime{k}{x}$ is adopted throughout this paper, instead of the more classical $x^{(k)}$, for the $k$-th derivative of $x$.
%

\subsection{John Pryce's \sigmamethod\ for square systems}
\label{sec:structural}
\label{reoiughfuil}
The DAE systems we consider are ``square'', i.e., have the form (\ref{sepguiohp}), with $m=n$. Call \emph{leading variables} of System~(\ref{sepguiohp}) the $d_i$-th derivatives $\pprime{d_i}{x_i}$ for $i=1,\dots,n$, where $d_i$ is the maximal differentiation degree of variable $x_i$ throughout $f_1=0,\dotsc,f_n=0$. The problem addressed by the structural analysis of DAE systems of the form (\ref{sepguiohp}) is the following. Regard (\ref{sepguiohp}) as a system of algebraic equations with the leading variables as unknowns. If this system is structurally nonsingular, then, given a value for all the $\pprime{k}{x_i}$ for $i=1,\dots,n$ and $k=0,\dots,d_i{-}1$, a unique value for the leading variables can be computed, structurally; hence, System~(\ref{sepguiohp}) is ``like an ODE''. If this is not the case, finding additional \emph{latent equations} by differentiating suitably selected equations from (\ref{sepguiohp}) will bring the system to an ODE-like form, while not changing its set of solutions. Performing this is known as \emph{index reduction.} 

Algorithms were proposed in the literature for doing it efficiently. Among them, the Pantelides' algorithm~\cite{pantelides} is the historical solution. We decided, however, to base our subsequent developments on the beautiful method proposed in~\cite{Pryce01} by J. Pryce, called the \sigmamethod. The \emph{\sigmamethod} also covers the construction of the block triangular form and addresses numerical issues, which we do not discuss here. 

\paragraph{Weighted bipartite graphs:}
We consider System~(\ref{sepguiohp}), which is entirely characterized by its set of dependent variables $X$ (whose generic element is denoted by $x$) and its set of equations $F=0$ (whose generic element is written $f=0$). 
We attach to (\ref{sepguiohp}) the bipartite graph $\cG=(F\cup{X},\Edges{\cG})$ having an edge $(f,x)\in{\Edges{\cG}}$ if and only if $x$ occurs in function $f$, regardless of its differentiation degree. Recall that a matching is \emph{complete} iff it involves all equations of $F$.

So far, $\cG$ is agnostic with respect to differentiations. To account for this, we further equip $\cG$ with \emph{weights}: to each edge $(f,x)\in\Edges{\cG}$ is associated a nonnegative integer $d_{\!f\!x}$, equal to the maximal differentiation degree of variable $x$ in function $f$. This yields a \emph{weight} for any matching $\cM$ of $\cG$ by the formula $w(\cM)=\sum_{(f,x)\in\cM}d_{\!f\!x}$.
Suppose we have a solution to the following problem:
\begin{problem}
	\label{liftuerhpituhepu8} 
Find a complete matching $\cM$ for $\cG$ and nonnegative integer \emph{offsets $\{c_f\mid{f\in{F}}\}$ and $\{d_x\mid{x\in{X}}\}$}, satisfying the following conditions, where the $d_{\!f\!x}$ are the weights as before:
\beq
\bea{rcll}
d_x-c_f &\!\!\!\geq\!\!\!& d_{\!f\!x} & \mbox{for all $(f,x)\in\Edges{\cG}$, with equality if } (f,x){\in}\cM
 \\
c_f &\!\!\!\geq\!\!\!& 0  & \mbox{for all } f\in{F}\,.
\eea
\label{ltuhltrli}
\eeq
\end{problem}
Then, differentiating $c_f$ times each function $f$ yields a DAE system $F_\Sigma=0$ having the following properties. Inequality $d_x\geq{c}_f+d_{\!f\!x}$ holds for each $x{\in}X$, and $d_x={c}_f+d_{\!f\!x}$ holds for the unique $f$ such that  $(f,x){\in}\cM$. Hence, the leading variables of DAE system $F_\Sigma=0$ are the 
$d_x$-th derivatives $\pprime{d_x}{x}$. Consequently, system $F_\Sigma=0$, now seen as a system of algebraic equations having $\pprime{d_x}{x}$ as dependent variables, is structurally nonsingular by Definition~\ref{erfuilehui}. Hence, $F_\Sigma=0$ is ``like an ODE''. The integer $k=\max_{f\in{F}}\,c_f$ is called the \emph{index} of the system.\footnote{We should rather say the \emph{differentiation index} as, once again, other notions of index exist for DAEs~\cite{CampbellGear1995} that are not relevant to our work. Also note that the standard definition of the differentiation index, used in Pryce's article~\cite{Pryce01}, slightly differs from the one that we adopted in this report for the sake of clarity.}
\begin{definition}
	\label{def:erlgtuioo} For $F$ a DAE system, the solution to Problem\,$\ref{liftuerhpituhepu8}$ yields the DAE system $F_\Sigma=$ $\{f^{\prime{c_f}}\mid f{\in}{F}\}$ together with the system of \emph{consistency constraints} $\overline{F}_\Sigma=\{f^{\prime{k}}\mid f{\in}{F},0{\leq}{k}{<}c_f\}$.
\end{definition}
Knowing the offsets also allows transforming $F$ into a system of index~$1$, by not performing the final round of differentiations. 
There are infinitely many solutions to Problem~\ref{liftuerhpituhepu8} with unknowns $c_f$ and $d_x$, since, for example, adding the same $\ell\in\bN_{\geq{0}}$ to all $c_f$'s and $d_x$'s yields another solution. We thus seek for a \emph{smallest} solution, elementwise.
Hence, Problem~\ref{liftuerhpituhepu8} is the fundamental problem we must solve, and we seek a smallest solution for it.

The beautiful idea of J. Pryce is to propose a linear program encoding Problem~\ref{liftuerhpituhepu8}. As a preliminary step, we claim that a bruteforce encoding of Problem~\ref{liftuerhpituhepu8} is the following: Find nonnegative integers $\xi_{f\!x},d_x,c_f$ such that
\begin{equation}
\bea{rl}
\left.\bea{r}
\sum_{f:(f,x)\in\Edges{\cG}}\;\xi_{f\!x}=1 \\ 
\sum_{x:(f,x)\in\Edges{\cG}}\;\xi_{f\!x}={1} \\ 
\xi_{f\!x}\geq{0} \eea\right\}\hspace*{-3mm} &\mbox{complete matching}
\\ [6mm]
\left.\bea{r}
d_x{-}c_f-d_{\!f\!x}\geq{0} \\ 
c_f\geq{0} \eea\right\}\hspace*{-3mm}  &\mbox{encodes ``$\geq$'' in (\ref{ltuhltrli})}
\\ [3mm]
\displaystyle\sum_{(f,x)\in\Edges{\cG}}\;\xi_{f\!x}(d_x{-}c_f-d_{\!f\!x})=0\hspace*{-1mm}  &\mbox{encodes ``$=$'' in (\ref{ltuhltrli})}
\eea
\label{kerfyugwerluh}
\end{equation}
where, for the formulas having no summation: $x,f$, and $(f,x)$ range over $X,F$, and $\Edges{\cG}$ respectively.
We now justify our claim. Focus on the first block. Since the $\xi$'s are nonnegative integers, they can only take values in $\{0,1\}$ and one defines a subgraph of graph $\cG$ by only keeping edges $(f,x)$ such that $\xi_{f\!x}=1$. The first two equations formalize that the chosen subset of edges is a matching, which  is complete since all vertices of $\cG$ are involved. The second block is a direct encoding of (\ref{ltuhltrli}) if we ignore the additional statement ``with equality iff''. The latter is encoded by the last constraint (since having the sum equal to zero requires that all the terms be equal to zero).
Constraint problem (\ref{kerfyugwerluh}) does not account for our wish for a ``smallest'' solution: this will be handled separately.

Following the characterization of solutions of linear programs via  \emph{complementary slackness conditions}, every solution of problem (\ref{kerfyugwerluh}) is a solution of the following dual linear programs (LP), where the $\xi_{f\!x}$ and the $c_f$ are real:
\beq\mbox{\emph{primal}} \hspace*{-3mm}&:&\bea{rl}
\mbox{maximize}&\sum_{(f,x)\in\Edges{\cG}}\;d_{\!f\!x}\;\xi_{f\!x} \\
[1mm] \mbox{subject to}&\sum_{f:(f,x)\in\Edges{\cG}}\;\xi_{f\!x}=1 \\
\mbox{and}&\sum_{x:(f,x)\in\Edges{\cG}}\;\xi_{f\!x}\geq{1} \\
\mbox{and}&\xi_{f\!x}\geq{0}
\eea
\label{erofuierhopiu}
\\ [2mm]
\mbox{\emph{dual}} \hspace*{-3mm}&:&\bea{rl}
\mbox{minimize}&\sum_xd_x{-}\sum_f\;c_f \\
[1mm] \mbox{subject to}&d_x{-}c_f\geq{d}_{\!f\!x} \\
\mbox{and}&c_f\geq{0}
\eea
\label{guiohpui}
\eeq
where, for the formulas having no summation: $x,f$, and $(f,x)$ range over $X,F$, and $\Edges{\cG}$ respectively.
In these two problems, $f$ ranges over $F$, $x$ ranges over $X$, and $(f,x)$ ranges over $\cG$. Also, we have relaxed the integer LP to a real LP, as all solutions to the integer LP are solutions of the real LP. Note that LP (\ref{erofuierhopiu}) encodes the search for a complete matching of maximum weight for $\cG$. By the principle of complementary slackness in linear programming, 
\beq
\mbox{
\begin{minipage}{7cm}
	 for respective optima of problems (\ref{erofuierhopiu}) and (\ref{guiohpui}), $\xi_{f\!x}{>}0$ if and only if $d_x{-}c_f=d_{\!f\!x}$,
\end{minipage}
}
\label{eogiheoiwy}
\eeq
which is exactly the last constraint of (\ref{kerfyugwerluh}). Using this translation into linear programs (\ref{erofuierhopiu}) and (\ref{guiohpui}), it is proved in~\cite{Pryce01}, Thm 3.6, that, if a solution exists to Problem~\ref{liftuerhpituhepu8}, then a unique elementwise smallest solution exists. 
Based on the above analysis, the following \rref{alg:sigmamethod} ($\atomicact{FindOffsets}$) was proposed in~\cite{Pryce01} for solving (\ref{erofuierhopiu},\ref{guiohpui}) and was proved to provide the smallest (real) solution for (\ref{guiohpui}), which happens to be integer.
\begin{algorithm}[ht]
\caption{$\atomicact{FindOffsets}$ ($\cG$ is a bipartite graph with weights $\{d_{\!f\!x}|(f,x)\in\Edges{\cG}\}$)}
	\label{alg:sigmamethod} 
\begin{enumerate}
	\item 
\label{elriuu}	Solve LP (\ref{erofuierhopiu}), which gives a complete matching $\cM$ of maximum weight for $\cG$; any method for solving LP can be used.
	\item \label{elrgfuehliu} Apply the following iteration until a fixpoint is reached (in finitely many steps), from the initial values $c_f=0$:
\begin{enumerate}
	\item \label{erlfiuhpui} $\forall x:d_x\la\max\{d_{\!f\!x}+c_f\mid(f,x)\in\Edges{\cG}\}$~;
	\item \label{louighgpuig} $\forall f:\, c_f\,\la d_x-d_{\!f\!x}$ where  $(f,x)\in\cM$~.
\end{enumerate}
\end{enumerate}
\end{algorithm}

The reason for using the special iterative algorithm for solving the dual LP (\ref{guiohpui}) is that a standard LP-solving algorithm will return an arbitrary solution, not necessarily the smallest one. The following lemma, whic is repeatedly used in Section~11.1 of~\cite{DBLP:journals/corr/abs-2008-05166} about the structural analysis of multimode DAE systems, is an obvious consequence of the linearity of problems (\ref{erofuierhopiu}) and (\ref{guiohpui}):
\begin{lemma}
	\label{rilugtlo}  Let $\cG$ be a given bipartite graph and let two families of weights 
	$(d^1_{\!f\!x})_{(f,x)\in\Edges{\cG}}$ and 	$(d^2_{\!f\!x})_{(f,x)\in\Edges{\cG}}$ be related by 
	$d^2_{\!f\!x}=M\times{d^1_{\!f\!x}}$ for every ${(f,x)\in\Edges{\cG}}$, where $M$ is a fixed positive integer. Then, the offsets of the corresponding \sigmamethod\ are also related in the same way: $d^2_x=M\times{d^1_x}$ for every variable $x$, and $c^2_f=M\times{c^1_f}$ for every function $f$.
\end{lemma}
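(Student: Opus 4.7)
My plan is to exploit the two linear programs (\ref{erofuierhopiu}) and (\ref{guiohpui}) underlying $\atomicact{FindOffsets}$, together with the uniqueness of the elementwise smallest solution guaranteed by Pryce's Thm 3.6.

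First I would look at the primal LP (\ref{erofuierhopiu}). The weights enter only in the objective $\sum_{(f,x)\in\Edges{\cG}} d_{\!f\!x}\,\xi_{f\!x}$; the constraints are identical for the two weight systems. Since $M$ is a positive integer, multiplying the objective by $M$ preserves the ordering of feasible values, so every complete matching of maximum weight for $(d^1_{\!f\!x})$ is also a complete matching of maximum weight for $(d^2_{\!f\!x})$. In particular, we may fix a common maximum-weight matching $\cM$ and use it in Step~\ref{elriuu} of Algorithm~\ref{alg:sigmamethod} for both weight systems.

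Next I would treat the dual LP (\ref{guiohpui}). The objective $\sum_x d_x - \sum_f c_f$ does not involve the weights, while the constraints $d_x - c_f \geq d_{\!f\!x}$ and $c_f\geq 0$ scale linearly. A direct substitution shows that a pair $(d_x, c_f)$ is feasible for weights $(d^1_{\!f\!x})$ iff $(M d_x, M c_f)$ is feasible for weights $(d^2_{\!f\!x})$; the corresponding objective values differ by the factor $M$. Thus the map $(d_x, c_f) \mapsto (M d_x, M c_f)$ is a bijection between the two feasible sets that preserves elementwise ordering.

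Finally, to transfer this equivalence to the elementwise smallest integer solution, I would run the fixpoint iteration of Step~\ref{elrgfuehliu} of Algorithm~\ref{alg:sigmamethod} simultaneously for both weight systems, using the same matching $\cM$ fixed above. Both iterations start from $c_f = 0$. Assume inductively that after some iteration $d^2_x = M\,d^1_x$ and $c^2_f = M\,c^1_f$ hold. Then step~\ref{erlfiuhpui} gives $d^2_x \la \max\{d^2_{\!f\!x} + c^2_f\} = M\,\max\{d^1_{\!f\!x} + c^1_f\} = M\,d^1_x$, and step~\ref{louighgpuig} gives $c^2_f \la d^2_x - d^2_{\!f\!x} = M(d^1_x - d^1_{\!f\!x}) = M\,c^1_f$. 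The relation is preserved at every iteration, hence also at the fixpoint, which by Thm~3.6 of~\cite{Pryce01} is the unique elementwise smallest solution. This yields $d^2_x = M\,d^1_x$ and $c^2_f = M\,c^1_f$ for all $x$ and $f$. I do not foresee a real obstacle; the only mild subtlety is ensuring the \emph{same} matching can be used in both runs, which is exactly what the primal analysis provides.
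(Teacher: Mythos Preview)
Your proof is correct and follows exactly the route the paper indicates: the paper does not give a detailed argument but simply states that the lemma ``is an obvious consequence of the linearity of problems~(\ref{erofuierhopiu}) and~(\ref{guiohpui})''. You have spelled out that linearity argument in full, including the pleasant observation that the same optimal matching $\cM$ can be reused and that the scaling relation propagates through every step of the fixpoint iteration in Algorithm~\ref{alg:sigmamethod}.
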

A necessary and sufficient condition for Problem~\ref{liftuerhpituhepu8} to have a solution is that the set of complete matchings for $\cG$ is non-empty. This criterion can be evaluated prior to computing the optimal offsets.
This completes our background material on structural analysis.

\subsection{{The \sigmamethod\ for non-square systems}}
\label{serukygftou}
Structural analysis must be extended to non-square systems if we wish to allow for guards being evaluated at the current instant---in this case, enabled/disabled equations are progressively determined along with the progressive evaluation of guards, see Appendix~A of~\cite{DBLP:journals/corr/abs-2008-05166}. 
Structural analysis is also a powerful tool for addressing  problems other than just the simulation of DAE systems, e.g., the generation of failure indicators for system diagnosis. For both uses, we will need to address other cases than square DAE systems. We develop here an adaptation of the \sigmamethod\ to non-square DAE systems. In our development, we could simply copy the original constructions, lemmas, and proofs, of~\cite{Pryce01}, while performing marginal adjustments to handle non-square systems. This is presented next.

If $F=0$ is a non-square DAE system, then its weighted bipartite graph $\cG$ has equation nodes and variable nodes in different numbers, hence the concept of complete matching does not apply. In this case, we consider instead the following weaker notion:
\beq
\mbox{
	we say that a matching $\cM$ is \emph{equation-complete} if it covers all the equation nodes.
} 
\label{petoghpeo}
\eeq
No equation-complete matching exists for an overconstrained system $F=0$, i.e., a system with less variables than equations.
This being said, we still consider the primal/dual problems~(\ref{erofuierhopiu}) and~(\ref{guiohpui}). Again, if $\cG$ possesses an equation-complete matching, then problem (\ref{erofuierhopiu}) has a solution, since we can complete this matching to find a feasible solution. Hence, the existence of at least one matching is still used, except that this matching has to be equation-complete instead of complete.

Next, consider respective optima of~(\ref{erofuierhopiu}) and~(\ref{guiohpui}). They still satisfy the slackness conditions~(\ref{eogiheoiwy}), which yields the following results.
\begin{itemize}
	\item An optimal solution of~(\ref{erofuierhopiu}) defines a subgraph $\cH\subseteq\cG$ by: $(f,x)\in\cH$ if and only if $\xi_{f\!x}=1$. Note that $\cH$ is not a matching, as one equation may be associated to more than one variable in this graph. In contrast, each variable is associated to exactly one equation.
	\item By the slackness conditions~(\ref{eogiheoiwy}), for each equation $f=0$ and each $x$ such that $(f,x)\in\cH$, $x$ occurs in $f$ with differentiation degree equal to the maximum differentiation degree in the entire system, namely $d_x$.
\end{itemize}
Call $F_\Sigma$ the DAE system defined by $\cH$ and denote by 
$Z=\{x^{d_x}\mid{x{\in}X}\}$ the set of its leading variables. $F_\Sigma$ is then regarded as an algebraic system of equations with $Z$ as dependent variables, and we apply the Dulmage-Mendelsohn (DM) decomposition to it. This yields $(\block_\underapprox,\block_\squared,\emptyset)$, as we know that the overconstrained block $\block_\overapprox$ is empty. 
%

\subsubsection{Justification of the extension}
To justify our above adaptation of Pryce's \sigmamethod, the following questions must be answered:
\begin{question} \rm
	\label {oerifueriofu} How to adapt Step~\ref{elrgfuehliu} of \rref{alg:sigmamethod}, that is, the fixpoint iteration used for computing the offsets?
\end{question}
\begin{question} \rm
	\label {eorifuoi} Is existence and uniqueness of the smallest solution still guaranteed in the non-square case?
\end{question}
%
%
\myparagraph{Addressing Question~$\ref{oerifueriofu}$} Step~\ref{erlfiuhpui} of \rref{alg:sigmamethod} is not modified. Step~\ref{louighgpuig}, on the other hand, cannot be kept as is, since $\cH$ is no longer a matching and there may be more than one $x$ such that $(f,x){\in}\cH$. Step~\ref{elrgfuehliu} of \rref{alg:sigmamethod} is thus modified as follows:
\begin{enumerate}
	\item[] 
	\begin{enumerate}
		\item \label{leriuhtpieu}
	$\forall x:d_x \gets \max\{d_{f\!x}+c_f\mid(f,x)\in\Edges{\cG}\}$;
 \item 
\label{sweletgihoi}
$\forall f:c_f \gets \max\{d_x-d_{f\!x}\mid(f,x)\in\cH\}$.
\end{enumerate}
\end{enumerate}
We call \rref{alg:sigmamethod}\remph{'} the resulting algorithm.

\myparagraph{Addressing Question~$\ref{eorifuoi}$}  This question is answered by the following lemma:
\begin{lemma}[non-square \sigmamethod]
	\label{ligfu}  \emph{\rref{alg:sigmamethod}\remph{'}} converges to a fixpoint if and only if $\cH$ is an optimal solution of the primal problem $(\ref{erofuierhopiu})$. Let $\bfc^*,\bfd^*$ be this fixpoint when it exists. Then $\bfc^*,\bfd^*$ is an integer-valued optimal solution for the dual problem $(\ref{guiohpui})$ and, for any optimal solution $\bfc^\dagger,\bfd^\dagger$ for this dual problem, $\bfc^\dagger\geq\bfc^*$ and $\bfd^\dagger\geq\bfd^*$ hold.
\end{lemma}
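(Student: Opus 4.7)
The plan is to mirror Pryce's proof of Theorem~3.6 in~\cite{Pryce01}, accommodating the non-square character of $\cH$ via complementary slackness for the primal--dual pair~(\ref{erofuierhopiu})--(\ref{guiohpui}).

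First, I would verify by an easy induction on the iteration count that, starting from $\bfc^{(0)}=\zzero$, the sequences $\bfc^{(k)}$ and $\bfd^{(k)}$ produced by \rref{alg:sigmamethod}$'$ remain integer-valued and componentwise non-decreasing, since the updates use only integer additions and maxima. In particular $c_f^{(k)}\geq 0$ throughout, a bound that will carry over to any fixpoint.

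Next, assume that $\cH$ is an optimal primal solution and pick any dual optimum $(\bfc^\dagger,\bfd^\dagger)$. By strong LP duality, complementary slackness holds for this pair, so $d_x^\dagger-c_f^\dagger=d_{fx}$ for every $(f,x)\in\cH$. I would then prove by induction on $k$ that $\bfc^{(k)}\leq\bfc^\dagger$ and $\bfd^{(k)}\leq\bfd^\dagger$. Step~(a) preserves the bound on $\bfd$ by dual feasibility, since
\[
d_x^{(k+1)}=\max\bigl\{d_{fx}+c_f^{(k)}\mid (f,x)\in\Edges{\cG}\bigr\}\leq\max\bigl\{d_{fx}+c_f^\dagger\mid (f,x)\in\Edges{\cG}\bigr\}\leq d_x^\dagger.
\]
Step~(b) preserves the bound on $\bfc$ using complementary slackness, which yields $d_x^\dagger-d_{fx}=c_f^\dagger$ for every $(f,x)\in\cH$, whence
\[
c_f^{(k+1)}=\max\bigl\{d_x^{(k+1)}-d_{fx}\mid (f,x)\in\cH\bigr\}\leq\max\bigl\{d_x^\dagger-d_{fx}\mid (f,x)\in\cH\bigr\}=c_f^\dagger.
\]
Being monotone, integer-valued and bounded by a finite quantity, the iterates stabilize at some $(\bfc^*,\bfd^*)$ in finitely many steps.

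Third, reading off the fixpoint equations, step~(a) gives $d_x^*\geq c_f^*+d_{fx}$ for every edge of $\cG$, so $(\bfc^*,\bfd^*)$ is dual feasible (with $c_f^*\geq 0$ from the first paragraph); step~(b) gives $c_f^*\geq d_x^*-d_{fx}$ for every $(f,x)\in\cH$, and combining the two yields $d_x^*-c_f^*=d_{fx}$ on $\cH$. Hence $\cH$ and $(\bfc^*,\bfd^*)$ satisfy complementary slackness, and strong LP duality certifies $(\bfc^*,\bfd^*)$ as dual optimal; it is integer-valued by the first paragraph. Minimality follows by applying the induction bound of the second paragraph to an \emph{arbitrary} dual optimum $(\bfc^\dagger,\bfd^\dagger)$: since complementary slackness with $\cH$ holds for every pair of primal--dual optima, one gets $\bfc^*\leq\bfc^\dagger$ and $\bfd^*\leq\bfd^\dagger$. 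For the converse, the fixpoint analysis never used optimality of $\cH$: any fixpoint yields a dual-feasible $(\bfc^*,\bfd^*)$ satisfying complementary slackness with $\cH$, and strong duality then forces $\cH$ to be primal optimal.

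The hard part is step~(b), which uses a ``$\max$ over $(f,x)\in\cH$'' rather than the single-edge update of Pryce's square version. Since $\cH$ is not a matching, an equation $f$ may be incident to several variables in $\cH$; compatibility of the bound $c_f^{(k+1)}\leq c_f^\dagger$ requires that all quantities $d_x^\dagger-d_{fx}$ for $(f,x)\in\cH$ \emph{coincide}. This is exactly what complementary slackness asserts, forcing every such quantity to equal $c_f^\dagger$, and the primal optimality of $\cH$ is essential at precisely this point.
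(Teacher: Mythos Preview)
Your proof is correct and follows essentially the same approach as the paper's, relying on complementary slackness for the primal--dual pair~(\ref{erofuierhopiu})--(\ref{guiohpui}) and on monotonicity of the update map starting from $\bfc=\zzero$. The only cosmetic difference is that the paper phrases boundedness and minimality via the fixpoint property $\bfc^\dagger=\phi(\bfc^\dagger)\geq\phi^k(\zzero)$ of the composed map $\phi$, whereas you carry an explicit induction bound $\bfc^{(k)}\leq\bfc^\dagger$, $\bfd^{(k)}\leq\bfd^\dagger$; the two formulations are equivalent, and both hinge on the observation (which you spell out in your final paragraph and the paper notes parenthetically) that all quantities $d_x^\dagger-d_{fx}$ for $(f,x)\in\cH$ with a common $f$ coincide.
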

The last statement expresses the uniqueness of the smallest optimal solution.
Lemma~\ref{ligfu} was stated and proved in~\cite{Pryce01}, albeit for square DAE systems. Here, we will be carefully following the steps of that proof, while checking that they suit the non-square case.
\begin{proof}
Following~\cite{Pryce01}, let $$\phi:\bfc\mapsto\phi(\bfc)$$ be the mapping corresponding to the successive application of steps~(a) and~(b) to $\bfc$.

Assume that $\cH$ is an optimal solution of~(\ref{erofuierhopiu}) and $\bfc,\bfd$ is an optimal solution of its dual problem~(\ref{guiohpui}). By slackness conditions~(\ref{eogiheoiwy}), one has $d_x=c_f+d_{f\!x}$ for $(f,x)\in\cH$. 
Starting from $\bfc$, step~(a) yields
\[\bea{rcl}
\bfd(\bfc) &=&\left\{\left.
\max_{(f,x)\in\Edges{\cG}}\,(d_{f\!x}+c_f)\,\right| x{\in}X
\right\}
\\ [1mm]
&=& \left\{\left.
d_{f\!x}+c_f \mbox{ where } (f,x)\in\cH \,\right| x{\in}X
\right\}
\eea
\]
so that, in turn, applying step~(b) yields $\bfc$: hence, $\phi(\bfc)=\bfc$ (actually, in step~(b), all the $d_x-d_{f\!x}$ for $(f,x)\in\cH$ are equal). In other words, $\bfc$ is a fixpoint of $\phi$.

Conversely, let $\cH\subseteq\cG$ be any feasible solution of (\ref{erofuierhopiu}), not necessarily optimal, and let $\bfc^*,\bfd^*$ be any fixpoint for $\phi$ (note that $\phi$ depends on $\cH$).
We have $d^*_x\geq d_{f\!x}+c^*_f$ for any $(f,x)\in\Edges{\cG}$, with equality if $(f,x)\in\cH$. Assume that $\bfc^*\geq{0}$ also holds; as a matter of fact, $\bfc^*,\bfd^*$ is a feasible solution of the dual problem~(\ref{guiohpui}). Finally, $\cH$ is an optimal solution of the primal problem, and $\bfc^*,\bfd^*$ an optimal solution of the dual problem, because they satisfy the slackness conditions~(\ref{eogiheoiwy}), which are known to characterize optimal solutions. By contraposition, if $\cH$ is not optimal, then no fixpoint of $\phi$ exists.

It remains to be proved that $\bfc^*\geq{0}$ holds. By construction, $\phi$ is nondecreasing: $\bfc'\leq\bfc$ implies $\phi(\bfc')\leq\phi(\bfc)$. Denote by $\bfc^0,\bfd^0,\bfc^1,\bfd^1 \dots$ the alternating sequence of $\bfc$'s and $\bfd$'s obtained by looping over steps (a) and (b), alternatively. We have $\bfc^k=\phi(\bfc^{k-1})$ for $k>0$. If $\bfc^0=0$, then $\bfd^0\geq{0}$ and $c^1_f=d^0_x-d_{f\!x}\geq{0}$, where $x$ is any variable such that $(f,x)\in\cH$ (the value $d^0_x-d_{f\!x}$ is independent from this choice). As a consequence, $\bfc_1\geq{0}$. Since $\phi$ is nondecreasing, applying $\phi$ to both sides of $\bfc_1\geq\bfc_0=0$ yields $\bfc_2\geq\bfc_1$, and so on, so that the algorithm yields a nondecreasing sequence, showing that $\bfc^*\geq{0}$.

Finally, if $\bfc^\dagger,\bfd^\dagger$ is another dual-optimal solution for the offsets, then $\bfc^\dagger$ is a fixpoint of $\phi$ and thus $\bfc^\dagger=\phi(\bfc^\dagger)\geq\phi(0)$, hence 
$\bfc^\dagger\geq\phi^k(0)$ for every $k$, whence $\bfc^\dagger\geq\bfc^*$ follows.  
\end{proof}
\subsubsection{Link with the Pantelides algorithm}
In this section, we closely follow Section~5.3 of~\cite{Pryce01}, regarding the comparison of the \sigmamethod\ with Pantelides' method~\cite{pantelides} for the square case, and adapt it to the non-square case.

Pantelides considers DAE systems with degree at most $1$ for all variables. His construction finds a \emph{minimally structurally singular} (MSS) subset of equations and differentiates each equation in this set, creating an augmented system. This is repeated until no more MSS are found. Pantelides writes the system as $n+m$ equations $F(X,Z)=0$, 
where $X$ stands for the $n$ variables whose derivatives appear, $Y$ stands for the $m$ algebraic variables, and $Z=(\dot{X},Y)$. Pantelides' construction applies to this reformulation. For our comparison, the $Z$ coincides with the collection of \emph{leading variables} $x^{d_x}$ where $d_x=\max_fd_{f\!x}$ is the leading differentiation degree of variable $x$ in the system. 

We assume that the primal AP~(\ref{erofuierhopiu}) has an optimal solution $\cH^*$, and that there exist smallest optimal offsets $c^*_f,d^*_x$ satisfying $\sum_xd^*_x-\sum_fc^*_f=w(\cH^*)$ (the total weight of $\cH^*$, defined as $\sum_{(f,x)\in\cH^*}d_{f\!x}$). Let
\beq
\widehat{d}_x \eqdef \max_fd_{f\!x}
\label{eriuehyopiu}
\eeq
and define the \emph{leading derivative pattern} $L$ of the weighted bipartite graph $\cG$ (with weights $d_{f\!x}$) as the set
\[
L=\{(f,x)\mid d_{f\!x}=\widehat{d}_x\}
\]
For some equation $f$, respectively some subset $E$ of equations, consider
\[
L_f=\{x\mid(f,x)\in{L}\}~\mbox{, and }~L(E)=\bigcup_{f\in{E}}L_f\,.
\]
Following Pantelides, we say that $E$ is \emph{structurally singular (SS)} if $|L(E)|<|E|$. $E$ is \emph{minimally SS (MSS)} if it is SS and has no proper SS subset.
The link with the $\Sigma$-method is established in the following results.
\begin{lemma}
	\label {0589thgui} \
\begin{enumerate}
	\item \label{elriuliu} If $E$ is MSS with $k$ elements, then $|L(E)|=k-1$
	\item The following three statements are equivalent:
\begin{enumerate}
	\item \label{erligfueghli} $L$ has an equation-complete matching, see $(\ref{petoghpeo})$;
	\item \label{poeiheruopo} There are no SS subsets;
	\item \label{prohsfo} All $c_i$ are equal to $0$.
\end{enumerate}
\end{enumerate}
\end{lemma}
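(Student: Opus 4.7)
For statement~\ref{elriuliu}, the plan is a one-line counting argument. Since $E$ is SS we have $|L(E)|<k$, so it suffices to rule out $|L(E)|\leq k-2$. Picking any $f_0\in E$ and setting $E'=E\setminus\{f_0\}$ gives $|E'|=k-1$ and $L(E')\subseteq L(E)$, hence $|L(E')|\leq k-2<|E'|$, which would make $E'$ a proper SS subset of $E$ and contradict minimality. This forces $|L(E)|=k-1$.

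For the equivalence (\ref{erligfueghli})$\Leftrightarrow$(\ref{poeiheruopo}) I would simply invoke Hall's marriage theorem applied to the bipartite graph $L=(F\cup X,L)$: an equation-complete matching of $L$ exists iff every $E\subseteq F$ satisfies $|L(E)|\geq|E|$, which is the negation of ``some SS subset exists''. Both directions are immediate.

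The link with the offsets, (\ref{erligfueghli})$\Leftrightarrow$(\ref{prohsfo}), is the main substantive step and will rely on the non-square \sigmamethod\ (Lemma~\ref{ligfu}) together with complementary slackness~(\ref{eogiheoiwy}). For (\ref{erligfueghli})$\Rightarrow$(\ref{prohsfo}), start from an equation-complete matching $\cM_L$ of $L$ and extend it to a primal-feasible $\cH$ by adding, for each $x\notin\cM_L$, any edge $(f,x)$ with $d_{f\!x}=\widehat{d}_x$ (such an edge exists by the definition~(\ref{eriuehyopiu}) of $\widehat{d}_x$). Then $\cH\subseteq L$, each variable is covered exactly once and each equation at least once, so $\cH$ is primal-feasible, and its weight is $\sum_x\widehat{d}_x$, which is maximal. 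Running \rref{alg:sigmamethod}' on $\cH$ from the initial point $\bfc=0$ yields $d_x=\widehat{d}_x$ at step~(a), and then step~(b) returns $c_f=\max\{\widehat{d}_x-d_{f\!x}\mid(f,x)\in\cH\}=0$ since $\cH\subseteq L$; hence $\bfc=0$ is a fixpoint, and by Lemma~\ref{ligfu} this is the smallest optimal offset vector. Conversely, for (\ref{prohsfo})$\Rightarrow$(\ref{erligfueghli}), from $\bfc^*=0$ step~(a) forces $\bfd^*=\widehat{\bfd}$, and by slackness any primal-optimal $\cH^*$ satisfies $\xi_{f\!x}>0\Rightarrow d^*_x-c^*_f=d_{f\!x}$, i.e.\ $(f,x)\in L$; thus $\cH^*\subseteq L$. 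Since $\cH^*$ covers every variable exactly once and every equation at least once, selecting for each equation $f$ a single $x$ with $(f,x)\in\cH^*$ produces an equation-complete matching of $L$ (distinctness of the chosen $x$'s comes from each variable being covered by exactly one edge of $\cH^*$).

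The main obstacle I anticipate is the third equivalence: one must resist the temptation to reason as in Pryce's square case, since in the non-square setting $\cH$ is no longer a matching, and the construction ``extend $\cM_L$ into a primal-feasible $\cH\subseteq L$'' has to be done explicitly to trigger the fixpoint $\bfc=0$. The rest of the argument is bookkeeping around Hall's theorem and complementary slackness.
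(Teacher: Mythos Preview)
Your proof is correct and follows essentially the same route as the paper: the counting argument for statement~\ref{elriuliu}, Hall's theorem for (\ref{erligfueghli})$\Leftrightarrow$(\ref{poeiheruopo}), and complementary slackness together with the fixpoint characterization of Lemma~\ref{ligfu} for the link with the offsets. Two minor differences: for (\ref{prohsfo})$\Rightarrow$(\ref{erligfueghli}) the paper argues by contraposition whereas you go directly, and for (\ref{erligfueghli})$\Rightarrow$(\ref{prohsfo}) your explicit extension of $\cM_L$ to a primal-feasible $\cH\subseteq L$ is actually cleaner than the paper's version, which treats the equation-complete matching itself as primal-optimal even though in the non-square case it leaves some variables uncovered and is therefore not primal-feasible as stated.
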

\begin{proof} 
For statement \ref{elriuliu}, if $L(E)$ has fewer than $k-1$ elements, then removing any element from $E$ yields a set $E'$ satisfying $|L(E')|\leq|L(E)|<k-1=|E'|$. Therefore, $E'$ is SS, which contradicts the minimality of $E$.

The equivalence \ref{erligfueghli}$\Leftrightarrow$\ref{poeiheruopo} is just Hall's Theorem applied to the sets $L_e$. 

\ref{erligfueghli}$\Ra$\ref{prohsfo}: If $L$ has an equation-complete matching $\cH$, then (by definition of $L$) for $(f,x){\in}\cH$, $d_{f\!x}=\widehat{d}_x$. In other words, $d_{f\!x}$ is maximal among all the $d_{\widehat{f}x}$ for $\widehat{f}$ ranging over the set of all equations. This means that $\cH$ is optimal for the primal problem. We have $d^*_x-c^*_f=d_{f\!x}=\widehat{d}_x$ for $(f,x)\in\cH$ hence $c^*_f=0,d^*_x=\widehat{d}_x$ is the (unique) smallest optimal solution for the dual problem.

$\neg$\ref{erligfueghli}$\Ra\neg$\ref{prohsfo}: If \ref{erligfueghli} fails to hold, then the optimal solution $\cH^*$ for the primal problem contains a $(f,x)$ with $d_{f\!x}<\widehat{d}_x$. Then, by the slackness condition~(\ref{eogiheoiwy}):
\beq
d^*_x=d_{f\!x}+c^*_f<\widehat{d}_x+c^*_f\,.
\label{lreifugflwyug}
\eeq
On the other hand, since $d^*_y,c^*_f$ is a dual-optimal solution, we have $d^*_y\geq{d_{fy}}+c^*_f\geq{d_{fy}}$ for all $y$, whence 
\beq
d^*_y\geq\widehat{d}_y\,. \label{lriuhwui}
\eeq
In particular, $d^*_y\geq\widehat{d}_y$; combining this result with~(\ref{lreifugflwyug}) yields $c_f>d^*_x-\widehat{d}_x\geq{0}$, so that~\ref{prohsfo} is false.
\end{proof}
\begin{lemma}
	\label {reiuheiopu} Let $\cH^*$ be a primal-optimal solution as before. If $c^*_{f_o}=0$ is satisfied for some equation $f_o$, then
\begin{enumerate}
	\item \label{eroiughyu} Let $x_o$ be such that $(f_o,x_o){\in}{\cH^*}$, then $(f_o,x_o)\in{L}$;
	\item \label{toiguehjoih} $c_f=0$ holds for any $f\neq{f_o}$ such that $(f,x_o)\in{L}$.
\end{enumerate}
\end{lemma}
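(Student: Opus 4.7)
The plan is to derive both statements by combining the complementary slackness condition~(\ref{eogiheoiwy}) with the dual-feasibility inequalities, essentially reusing the machinery developed in the proof of Lemma~\ref{0589thgui}.

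For statement~\ref{eroiughyu}, I would start from the fact that $(f_o, x_o) \in \cH^*$ forces, by slackness, the equality
\[
d^*_{x_o} \;=\; d_{f_o x_o} + c^*_{f_o} \;=\; d_{f_o x_o},
\]
using the hypothesis $c^*_{f_o}=0$. I then invoke the dual-feasibility bound $d^*_{x_o} \geq d_{f x_o} + c^*_f \geq d_{f x_o}$ for every equation $f$, which gives $d^*_{x_o} \geq \widehat{d}_{x_o}$ by taking the maximum over $f$ (this is precisely the inequality~(\ref{lriuhwui}) from the preceding proof). Combining these two facts yields $d_{f_o x_o} = \widehat{d}_{x_o}$, which is exactly the definition of $(f_o, x_o)\in L$.

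For statement~\ref{toiguehjoih}, fix any $f \neq f_o$ with $(f, x_o) \in L$, so that $d_{f x_o} = \widehat{d}_{x_o}$ by definition of $L$. Dual feasibility gives
\[
d^*_{x_o} \;\geq\; d_{f x_o} + c^*_f \;=\; \widehat{d}_{x_o} + c^*_f.
\]
But we just established $d^*_{x_o} = \widehat{d}_{x_o}$ in statement~\ref{eroiughyu}, so this forces $c^*_f \leq 0$. Since $c^*_f \geq 0$ by construction of the smallest optimal solution (proven as part of Lemma~\ref{ligfu}), we conclude $c^*_f = 0$, which is the claim.

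I do not expect any significant obstacle here: both parts are essentially one-line calculations once the slackness condition and the lower bound $d^*_x \geq \widehat{d}_x$ are in hand. The only subtlety worth flagging is that the argument crucially relies on $\bfc^*, \bfd^*$ being the \emph{smallest} optimal dual solution (so that nonnegativity of $\bfc^*$ is ensured), which is exactly the content of Lemma~\ref{ligfu} and is therefore available to us here.
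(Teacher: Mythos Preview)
Your proof is correct and follows essentially the same chain of inequalities as the paper's own argument: slackness gives $d^*_{x_o}=d_{f_ox_o}$, dual feasibility gives $d^*_{x_o}\geq\widehat{d}_{x_o}$, and the two combine to force $d_{f_ox_o}=\widehat{d}_{x_o}$ and then $c^*_f=0$ for any $f$ with $(f,x_o)\in L$. One small remark: you do not need to invoke Lemma~\ref{ligfu} or the ``smallest'' property to get $c^*_f\geq 0$, since nonnegativity of the $c_f$ is already a constraint of the dual LP~(\ref{guiohpui}) and hence holds for any dual-feasible solution.
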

\begin{proof}
	 We have 
	 \[\bea{rcl}
	 d_{f_ox_o} &=& d^*_{x_o} \mbox{ (since $d^*$ is optimal and $c^*_{f_o}=0$)}
	 \\ 
	 &\geq& \widehat{d}_{x_o} \mbox{ (by (\ref{lriuhwui}))} \\ 
	 &\geq& d_{f_ox_o} \mbox{ (by (\ref{eriuehyopiu}))}
	 \eea
	 \]
hence we get
\beq
d^*_{x_o} = \widehat{d}_{x_o} \label{leriuggukh}
\eeq
and $d_{f_ox_o}=\widehat{d}_{x_o}$, which proves statement \ref{eroiughyu}. For statement \ref{toiguehjoih}, if $(f,x_o)\in{L}$ for some $f$, then
\[
0=\widehat{d}_{x_o}-d_{fx_o}=d^*_{x_o}-d_{fx_o}\geq{c^*_f}\geq{0}
\]
whence $c^*_f=0$.
\end{proof}
\begin{lemma}
	\label {2g4huaepvifu} For $E$ any MSS, we have $\forall{f}\in{E}:c_f>0$.
\end{lemma}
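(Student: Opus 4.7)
The plan is to argue by contradiction: suppose $E_0 \eqdef \{f \in E : c_f = 0\}$ is non-empty, and derive a violation of Lemma~\ref{0589thgui}(\ref{elriuliu}), which asserts $|L(E)| = |E|-1$ for any MSS~$E$.

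First I would use Lemma~\ref{reiuheiopu} to exhibit a set of variables ``isolated from $E\setminus E_0$ in $L$''. For each $f \in E_0$ pick some $x_f$ with $(f,x_f)\in\cH^*$ and set $X_0 \eqdef \{x_f : f \in E_0\}$. Lemma~\ref{reiuheiopu}(\ref{eroiughyu}) gives $(f,x_f)\in L$, so $X_0 \subseteq L(E_0)$, and since each variable is paired with a unique equation in $\cH^*$ the $x_f$'s are pairwise distinct, yielding $|X_0|=|E_0|$. By Lemma~\ref{reiuheiopu}(\ref{toiguehjoih}), any $f' \neq f$ with $(f', x_f) \in L$ also satisfies $c_{f'}=0$; hence every equation of $E$ that is incident in $L$ to some variable of $X_0$ already lies in $E_0$. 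Equivalently, $L(E \setminus E_0) \cap X_0 = \emptyset$.

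Second, I would finish by a cardinality count. If $E_0 = E$, then $|L(E)| \geq |X_0| = |E|$ directly. If $E_0 \subsetneq E$, then $E\setminus E_0$ is a nonempty proper subset of the MSS set $E$, hence not SS, so $|L(E\setminus E_0)| \geq |E\setminus E_0|$. Combined with the disjointness just established,
\[
|L(E)| \;\geq\; |X_0| + |L(E\setminus E_0)| \;\geq\; |E_0| + |E\setminus E_0| \;=\; |E|.
\]
In either case this contradicts Lemma~\ref{0589thgui}(\ref{elriuliu}), so $E_0$ must be empty.

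The step I expect to be the main obstacle is cleanly establishing the closure property $L(E\setminus E_0)\cap X_0 = \emptyset$: one has to chain Lemma~\ref{reiuheiopu}(\ref{eroiughyu}) and~(\ref{toiguehjoih}) and rely on the uniqueness of the equation attached to each variable in $\cH^*$ (which still holds in the non-square extension, where each variable is matched to exactly one equation, even though an equation may cover several variables). Once that closure is in hand, the cardinality contradiction with Lemma~\ref{0589thgui}(\ref{elriuliu}) is immediate.
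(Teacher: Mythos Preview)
Your proof is correct and follows essentially the same route as the paper: introduce $E_0=\{f\in E:c_f=0\}$, use Lemma~\ref{reiuheiopu} to show that the $\cH^*$-matches of $E_0$ form a subset of $L(E)$ of size $|E_0|$ disjoint from $L(E\setminus E_0)$, and finish by a cardinality argument. The only cosmetic difference is the direction of the final inequality chain: the paper deduces $|L(E\setminus E_0)|\leq|L(E)|-|E_0|<|E\setminus E_0|$ to contradict the minimality of $E$, whereas you assume minimality (so $E\setminus E_0$ is not SS) and derive $|L(E)|\geq|E|$, contradicting Lemma~\ref{0589thgui}(\ref{elriuliu}); these are contrapositives of one another.
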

\begin{proof}
Suppose the lemma is false, meaning that $E_0\eqdef\{f{\in}{E}\mid{c_f}=0\}$ is nonempty. Let $|E|=k$ and $|E_0|=l$, so that $1{\leq}l{\leq}k$. By statement~\ref{eroiughyu} of Lemma~\ref{reiuheiopu}, we have $L(E_0)\geq{l}$. Since $|L(E)|<|E|$ by the assumption that $E$ is an MSS, we deduce $l<k$, hence $E\setminus{E_0}$ is nonempty. Pick any $f\in{E\setminus{E_0}}$, then $c_{f}>0$. By statement~\ref{toiguehjoih} of Lemma~\ref{reiuheiopu}, $L_{f}$ cannot contain any $x$ that is reachable from $E_0$ via $\cH^*$:
\beq
L(E\setminus{E_0})\,\cap\,\left[\exists{f}.\left((E_0{\times}X)\cap\cH^*\right)\right]
=\emptyset
\label{leghluk}
\eeq
The set on the right-hand side of $\cap$ is contained in $L(E)$ (by statement~\ref{eroiughyu} of Lemma~\ref{reiuheiopu}) and has cardinality at least equal to $|E_0|$. By (\ref{leghluk}), we get $|L(E\setminus{E_0})|\leq|L(E)|-|E_0|<|E\setminus{E_0}|$, hence $E\setminus{E_0}$ is a SS subset strictly contained in $E$, thus contradicting the minimality of $E$.
\end{proof}
Following these results, Pantelides' algorithm can be extended to non-square systems in pretty much the same way the  \sigmamethod\ was extended. This non-square Pantelides' algorithm locates MSS subsets and differentiates them, until there is no such subset left. Note that, when Pantelides' algorithm locates a MSS subset $I$ and differentiates it, this amounts to modifying the offsets found by the \sigmamethod\ in the following way:
\begin{itemize}
	\item the $d_j$ remain unchanged
	\item the $c_i$ decrease by $1$ if $i\in{I}$, remain unchanged otherwise.
\end{itemize}
From this and the preceding results, one gets:
\begin{theorem}[equivalence with Pantelides]
	\label {we4npye}  Pantelides' algorithm gives the same results as the $\Sigma$-method, in that the same offsets $c_i$ are found.
\end{theorem}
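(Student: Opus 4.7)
The plan is to combine the three preceding lemmas with the announced effect of one differentiation step on the $\Sigma$-method offsets. Let $c^P_i$ denote the total number of times Pantelides' algorithm differentiates equation $i$ throughout its execution. The goal is to show that at termination $c^P_i$ equals $c^\Sigma_i$, the offset returned by the $\Sigma$-method on the original weighted bipartite graph.

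I would proceed by induction on the number of Pantelides steps performed. At any intermediate stage, let $(\bfc,\bfd)$ denote the smallest optimal $\Sigma$-method offsets computed for the \emph{current} system. Suppose Pantelides locates an MSS subset $I$. By Lemma~\ref{2g4huaepvifu}, $c_f > 0$ for every $f \in I$, so each Pantelides step is consistent with decrementing the corresponding offsets. Using the rule stated just before the theorem---the $d_x$ are left unchanged while each $c_f$ with $f \in I$ decreases by~$1$---the new offsets for the system obtained after differentiating $I$ are exactly $c_f - 1$ for $f \in I$ and $c_f$ otherwise. Since $\sum_f c_f$ strictly decreases by $|I| \geq 1$ at each step and remains nonnegative (Lemma~\ref{ligfu}), the iteration terminates in finitely many steps.

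Termination of Pantelides corresponds to the current system admitting no MSS subset, i.e.\ no SS subset at all. By the equivalence \ref{poeiheruopo}$\Leftrightarrow$\ref{prohsfo} of Lemma~\ref{0589thgui}, this happens if and only if all current offsets $c_f$ equal zero. Conversely, if some $c_f > 0$, an MSS subset exists and Pantelides has not yet terminated. Chaining the inductive decrements yields that the cumulative number $c^P_i$ of Pantelides differentiations applied to equation $i$ equals $c^\Sigma_i - 0 = c^\Sigma_i$, which is the claimed equality.

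The main technical obstacle lies in justifying the announced modification rule itself, namely that replacing each weight $d_{fx}$ by $d_{fx}+1$ for every $f \in I$ leaves all $d_x$ unchanged and decrements each $c_f$ by exactly one. This is stated as a remark in the paper but must be verified: one should show that an optimal matching $\cH^*$ for the primal problem~(\ref{erofuierhopiu}) can be chosen so as to remain primal-optimal after the weight increment, and that the complementary slackness conditions~(\ref{eogiheoiwy}), combined with Lemma~\ref{reiuheiopu} (which pinpoints the indices where $c_f = 0$ and the active pairs in the leading derivative pattern $L$), force the new smallest dual solution to shift precisely as described. Provided the MSS structure of $I$ guarantees this invariance of the leading pattern under the increment, the rest of the argument follows directly by induction.
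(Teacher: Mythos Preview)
Your proposal is correct and follows essentially the same approach as the paper, which in fact offers no explicit proof at all: after stating the modification rule (the $d_x$ stay fixed, the $c_f$ for $f\in I$ decrease by~$1$), the paper simply writes ``From this and the preceding results, one gets'' and states the theorem. Your inductive argument---using Lemma~\ref{2g4huaepvifu} to guarantee $c_f>0$ on each MSS before decrementing, Lemma~\ref{0589thgui}\,(\ref{poeiheruopo}$\Leftrightarrow$\ref{prohsfo}) to characterise termination by $c_f\equiv 0$, and summing the decrements to recover $c^\Sigma_i$---is exactly the intended reading of that sentence, spelled out with more care than the paper itself provides. You are also right that the modification rule is asserted rather than proved in the paper; your sketch of how to justify it via slackness and the invariance of the leading-derivative pattern is the natural route, and the paper leaves that gap open as well.
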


\subsection{{Differential and difference arrays}}
\label{lrtioghprtjpi}
Differential arrays were proposed in~\cite{CampbellGear1995} as a tool for finding the latent equations of a DAE system. To $F=0$ a DAE system, we associate its $k$-th \emph{differential array} $\cA_k$ defined, for any $k\in\bN_{\geq 0}$, by
\beq
\cA_k&\eqdef&\left[\bea{r}
F \\ \frac{d}{dt}F \\[1mm] \frac{d^2}{dt^2}F \\ \vdots~~ \\ \frac{d^k}{dt^k}F
\eea\right]
\label{ledriuhrgtoidpor}
\eeq
where $\frac{d}{dt}$ denotes the total time derivative. Thus, for $k$ sufficiently large, array system $\cA_k=0$ should contain all the latent equations that must be added to $F=0$ in order to reduce its index to $0$. More precisely, the index reduced counterpart of $F=0$ is the following system with existential quantifiers:
\beq
\exists W : \cA_k(X,W,Y)=0 \label{letrughntloikhj}
\eeq
where:
\begin{itemize}
	\item $Y$ collects the free variables of system $F=0$;
	\item $X$ collects the dependent variables of system $F=0$;
	\item $W$ collects other supplementary variables involved in array $\cA_k$.
\end{itemize}
\emph{Difference arrays} are defined similarly for discrete-time dAE systems, by replacing the $j$-th time derivatives $\frac{d^j}{dt^j}F$ by the $j$-th forward shift $\ppostset{j}{F}$ defined in (\ref{lrigkuhkjjhvahtr}), when building array (\ref{ledriuhrgtoidpor}):
\beq
\cA_k&\eqdef&\left[\bea{l}
F \\ \postset{F} \\[1mm] \ppostset{2}{F} \\ ~\vdots \\ \ppostset{k}{F}
\eea\right]
\label{liftdrjngbklj}
\eeq
The structural analysis of systems with existential quantifiers, of the form (\ref{elrifuhuio}),
was studied in \rref{sec:operiuhlo}. We can apply it to the structural analysis of arrays (\ref{ledriuhrgtoidpor}) or (\ref{liftdrjngbklj}) to find the latent equations of a DAE system, and, thus, its index. 

The resulting algorithm is much less efficient than Pryce's \sigmamethod\ or Pantelides algorithm. The reason is that, unlike the above methods, the one of \rref{sec:operiuhlo} does not exploit the fact that the rows of the array are successive shifts of the same dAE system.
In turn, \emph{the method of \rref{sec:operiuhlo} extends to time-varying discrete-time systems.} We just need to replace, in (\ref{liftdrjngbklj}), the shifted versions of $F$ by the successive $F(t_0),\dots,F(t_k)$ at successive discrete instants $t_0,\dots,t_k$. This is used in~\cite{BENVENISTE2020,DBLP:journals/corr/abs-2008-05166} to develop a structural analysis of finite cascades of mode changes.


\section{Summary of results of \cite{BENVENISTE2020,DBLP:journals/corr/abs-2008-05166} regarding multimode DAE systems}
\label{lgiorthlretoighj}
In this section, we briefly review the results of \cite{BENVENISTE2020,DBLP:journals/corr/abs-2008-05166} regarding multimode DAE systems simulation. We indicate where, in~\cite{BENVENISTE2020,DBLP:journals/corr/abs-2008-05166}, the developments of Sections~\ref{lrieughltiughliu} and~\ref{lfdkuvekrughluh} of this report are used.

\subsection{Multimode DAE and dAE systems definition}
In~\cite{BENVENISTE2020,DBLP:journals/corr/abs-2008-05166}, a mathematical definition for multimode DAE systems is provided, and we also define their discrete-time counterpart, namely multimode dAE systems,
of the respective forms
\beq\bea{rrcl}
\mbox{multimode DAE:}&
\mbox{if } \guard_j(x_i\mbox{'s and derivatives}) &\mbox{then}&
f_j(x_i\mbox{'s and derivatives})=0
\\
\mbox{multimode dAE:}&
\mbox{if } \guard_j(x_i\mbox{'s and shifts}) &\mbox{then}&
f_j(x_i\mbox{'s and shifts})=0\,,
\eea
\label{kuhgflkgfkjh}
\eeq
where $\ppostset{k}{x}$, the $k$-shift of $x$, is defined in (\ref{lrigkuhkjjhvahtr}).

\subsection{Nonstandard semantics}
To allow for a uniform handling of modes and mode change events in the structural analysis of multimode DAE, we use a \emph{nonstandard semantics} for DAEs and multimode DAEs. In this nonstandard semantics, 
\beq
\mbox{derivative $\dot{x}$ is interpreted as its first-order forward Euler scheme $\frac{\postset{x}-x}{\vsmall}$,}
\label{gleriuglhwu}
\eeq
where continuous-time is discretized by using an \emph{infinitesimal} positive time step $\vsmall$ and the forward shift $\postset{x}$ is defined in this discrete-time. Here, as already stated at the beginning of Section~\ref{edjywtefjyt},
\[
\mbox{
\begin{minipage}{11cm}
	 ``infinitesimal'' means ``smaller than any positive real number'', which can be given a formal meaning in \emph{nonstandard analysis}~\cite{Cutland,Robinson}.
\end{minipage}
}
\]
 This mapping allows for a discretization of multimode DAE systems with an infinitesimal error, which yields a faithful discrete-time approximation. Of course, there is no free lunch: the nonstandard semantics is not effective in that there is no computer that can execute it. Still, it is highly useful for the structural analysis, which is a symbolic analysis.
The nonstandard semantics of a multimode DAE systems is a multimode dAE system, in which 
\begin{itemize}
	\item occurrences of increments $\frac{\postset{x}-x}{\vsmall}$ captures the continuous-time dynamics of the considered DAE (for this reason, we call \emph{continuous} the modes in which such dynamics occurs), whereas 
	\item \emph{mode changes} are represented by transitions relating the value $x^-$ of a state just before the change and its value $x^+$ right after the change, encoded in the nonstandard semantics as $x$ and $\postset{x}$.
\end{itemize}

\subsection{{Structural analysis of multimode DAE/dAE systems}}
The structural analyses of both DAE and dAE systems coincide, by replacing the differentiation operator $x\mapsto\dot{x}$ arising in DAEs, by the forward shift operator $x\mapsto\postset{x}$ arising in dAEs. 
Pryce's \sigmamethod\ applies to both DAE and dAE; for the latter, the equation offsets $c_f$ indicate how many times each equation $f$ must be shifted.

\subsubsection{Handling continuous modes}
Since structural analyses mirror each other in continuous and discrete-times through the mapping (\ref{gleriuglhwu}), the structural analysis of continuous modes is performed as usual, with the help of the \sigmamethod\ presented in Section~\ref{lfdkuvekrughluh}. This is summarized in the following
\begin{tool}
	\label{hysejfdigvuy} 
	Standard structural analysis can be performed for each continuous mode, in order to add the needed latent equations. The continuous-time structural analysis (where latent equations are found by differentiation) is therefore used to generate code within continuous modes. 
\end{tool}

\subsubsection{Handling mode changes}
\label{ujydtfweutyf}

\begin{tool}
	\label{suiytdsfi} 
	In order to prepare for the generation of restart code at mode changes, the discrete-time structural analysis (where latent equations are found by shifting) based on the nonstandard semantics is applied to the continuous  modes before and after the change.
\end{tool}
Basic tools~\ref{hysejfdigvuy} and~\ref{suiytdsfi} rely on J. Pryce's \sigmamethod.
Once latent equations have been added for each continuous mode, we may have a conflict, at mode change events separating two successive continuous modes, between:
\medskip

$\bullet$ the next values of states predicted by the dynamics in the previous mode; and
	
$\bullet$ the consistency equations, generated by the structural analysis of the new mode.

\medskip

\noindent Is this pathological? Not quite. Such a situation can occur even for physically meaningful models. If the new continuous mode has positive index, then, nontrivial consistency equations are generated by the structural analysis, which may be conflicting with the dynamics of the continuous mode before the change. Since one cannot regard this model as being incorrect, we need a policy for handling these conflicts. 

\paragraph{Handling isolated mode changes separating two successive continuous modes:}
In~\cite{BENVENISTE2020,DBLP:journals/corr/abs-2008-05166}, it is proposed to call causality to the rescue: for isolated mode changes, we give priority to the previous dynamics and postpone for a while the conflicting consistency equations generated by the new continuous mode. Here, ``for a while'' means ``for a finite number of infinitesimal time steps'', which amounts to zero time in real duration. 
\begin{tool}
	\label{jysdtcfguijy} Identifying the conflicting consistency equations at mode changes is performed by using the Dulmage-Mendelsohn decomposition (Definition~$\ref{elrftuierhfperui}$ and associated Lemmas~$\ref{erlihpfiu}$ and~$\ref{togiethgouiho}$). 
\end{tool}
Conflicting consistency equations of the new continuous mode are postponed for a finite number of nonstandard instants. The resulting discrete-time system operates for finitely many nonstandard instants, and is the basis for recovering the restart values of the states in the new mode.
This establishes the structural analysis of mode changes separating two successive continuous modes.

\paragraph{Handling finite cascades of mode changes:}
We also handle in~\cite{BENVENISTE2020,DBLP:journals/corr/abs-2008-05166} finite cascades of successive mode changes separating two successive continuous modes. Since the dynamics varies over the successive instants of the cascade, time-invariance no longer holds, and, thus, the \sigmamethod\ does not apply. However, the method of \emph{difference arrays,} closely derived from Campbell and Gear's \emph{differential arrays}~\cite{CampbellGear1995}, can be adapted to time-varying discrete-time systems, and invoked. See Section~\ref{lrtioghprtjpi} and formula (\ref{liftdrjngbklj}) for a short introduction to difference arrays.

For time-invariant dynamics, difference arrays are obtained by stacking, one below the other, the dAE dynamics and its successive shifted versions. For a finite cascade of mode changes leading to a new continuous mode, however, we form the associated difference array by stacking, one below the other, the dynamics that holds at each successive instant of the cascade. More precisely, let $t$ be the current instant in the nonstandard semantics, and let $\vsmall$ be the infinitesimal time step. Then, the time-varying array associated to the cascade of mode changes is
\beq
\cA_k(t) &\eqdef& \left[\bea{l}
F(t) \\ F(t+\vsmall) \\ F(t+2\vsmall) \\ ~\vdots \\ F(t+k\vsmall)
\eea\right]
\label{jsgxcjhcgv}
\eeq
where $F(t)$ is the dynamics at the first event of the cascade, $F(t+\vsmall)$ is the dynamics at the second event of the cascade, and so on until the instant $t+k\vsmall$, which sits within the new continuous mode.
Enough rows should be added to the array, so that the leading variables of the first event of the cascade are uniquely determined as functions of the states before the change. As for differential arrays,  the extra variables brought by the successive shifting must be eliminated by existential quantification---see the comments regarding formula (11) introducing differential arrays in~\cite{CampbellGear1995}.
Finite cascades of mode changes are thus handled by using the results of Section~\ref{sec:operiuhlo} regarding equations with existential quantifiers:
\begin{tool}
	\label{jfhvdgjyg} Finite cascades of mode changes are handled by forming the \emph{difference array} $(\ref{jsgxcjhcgv})$ and analyzing it by using the method of Section~$\ref{sec:operiuhlo}$ regarding equations with existential quantifiers.
\end{tool}

\paragraph{Fixpoint between guards and the equations they control:}
In a multimode DAE system, a logico-numerical fixpoint occurs when a variable $x$ is determined by a different dynamics depending on the value of a guard that itself depends, directly or indirectly, on $x$. In~\cite{BENVENISTE2020,DBLP:journals/corr/abs-2008-05166}, only multimode DAE systems having no logico-numerical fixpoint are supported. In such models, the above guard should rather depend on the left-limit $x^-$ of $x$, where $x^-(t)=\lim_{s\nearrow{t}}x(s)$. Nevertheless, in Appendix~A of~\cite{DBLP:journals/corr/abs-2008-05166}, we propose a method for handling multimode DAE systems with fixpoints but no cascades of length $>1$. This method uses the following basic tool:
\begin{tool}
	\label{ikjuftdsfgiuy} 
The \sigmamethod\ for non-square systems, presented in Section~\ref{serukygftou}.
\end{tool}
\subsubsection{Generic form of the generated simulation code}
For $S$ a multimode DAE model having continuous modes separated by finite cascades of mode change events, the method proposed in~\cite{DBLP:journals/corr/abs-2008-05166} returns the following:

\paragraph{Case of a structurally incorrect model:} The compilation algorithm returns, for each mode and mode change event in which structural analysis fails, the overdetermined subsets of equations and the underdetermined subsets of variables.

\paragraph{Case of a structurally correct model, producing an \emph{\mDAE\ interpreter}:} The compilation algorithm returns an \emph{\mDAE\ interpreter,} which is a labeled bipartite graph $\cG$ whose vertices are:
	\begin{itemize}
		\item state and leading variables $x$ of the system, as well as guards $\guard$; 
		\item structurally nonsingular blocks $\block$ of algebraic equations determining leading variables from state variables, and expressions $\mathit{exp}$ determining the values of guards as function of state and leading variables.
	\end{itemize}
	A branch $x\ra\block$ exists in $\cG$ if $x$ is a free variable (input) of block $\block$ and a branch $\block\ra{x}$ exists in $\cG$ if $x$ is a dependent variable (output) of block $\block$. A branch $x\ra\mathit{exp}$ exists in $\cG$ if $\mathit{exp}$ has $x$ as one of its arguments, and a branch $\mathit{exp}\ra{\guard}$ exists in $\cG$ if $\guard$ is computed using expression $\mathit{exp}$.
	
Not every block is involved in a given mode or mode change. Thus, to each block $\block$ we associate a \emph{label} $\lambda(\block)$, which is a property characterizing the set of all modes or mode changes in which $\block$ is active. A label is then assigned to every \emph{path} $\path$ traversing blocks $\block_1,\dots,\block_k,\dots,\block_K$ by setting $\lambda(\path)=\bigwedge_1^K\lambda(\block_k)$. Every circuit of $\cG$ has label $\fff$, which ensures that the code is not globally circular. Also, for any two blocks $\block_1\neq\block_2$ leading to the same variable $x$ in $\cG$, we have $\lambda(\block_1)\wedge\lambda(\block_2)=\fff$, which ensures that two different blocks never compete at determining the same variable $x$.
This interpreter encodes all the different schedulings that are valid for the different modes of the system. Note that modes are not enumerated, which is essential in order for this technique to scale up.

This technique is reminiscent of the \emph{conditional dependency graphs} used in the compilation of the Signal synchronous language~\cite{benveniste-distribution00,synchronous-twelve-years-later}.

The notion of \mDAE\ interpreter is illustrated in~\rref{fig:rldc2cdg}, for an RLDC2 circuit example with schematic shown in \rref{fig:rldc2-sch} and model given in \rref{sys:rldc2-system}, see Section 12 of~\cite{DBLP:journals/corr/abs-2008-05166} for details. This form of \mdAE\ interpreter is used in the \IsamDAE\ tool~\cite{Caillaud2020a}, by representing all the label using BDD-related data structures.\footnote{BDD means: Binary Decision Diagram, a data structure developed for model checking.}
\begin{figure}[!h]
\begin{center}
\includegraphics[width=0.80\linewidth]{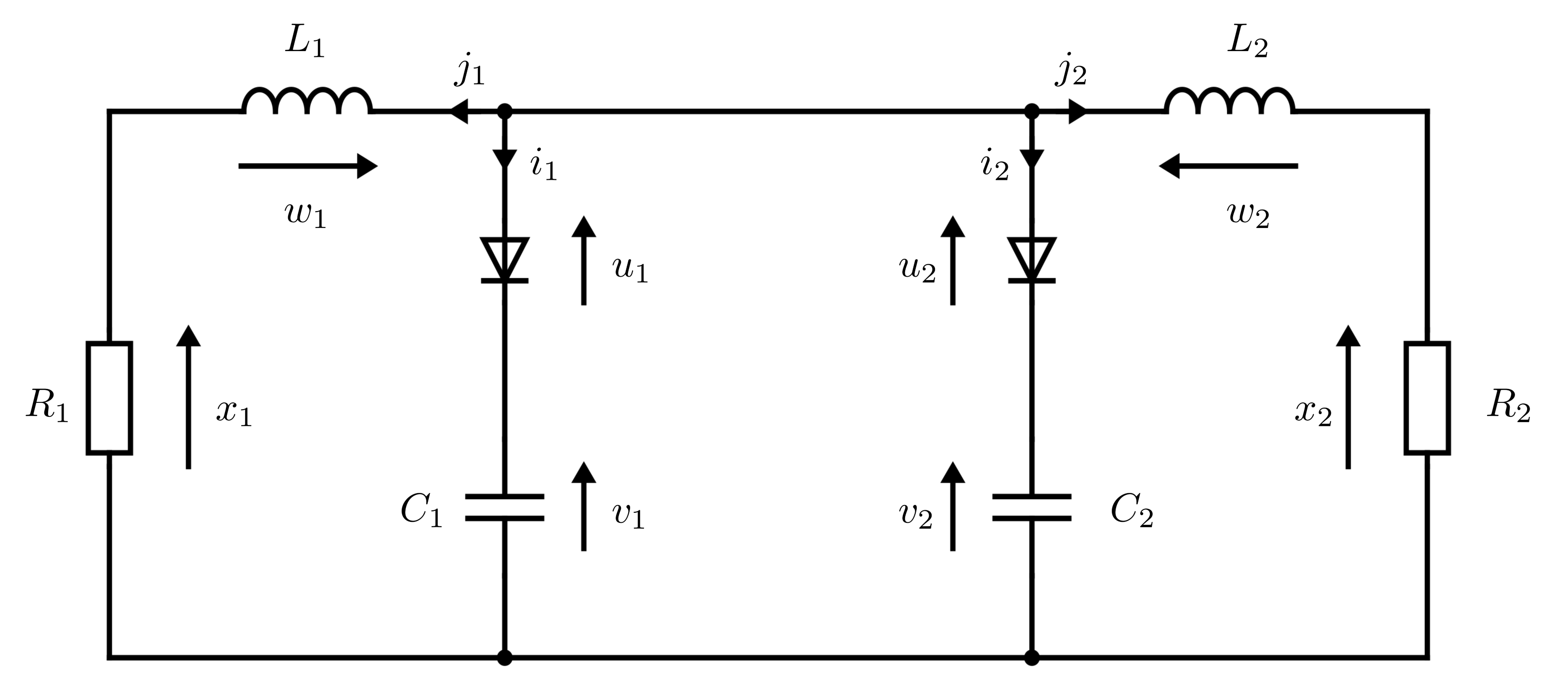}
\end{center}
\caption{Schematics of the RLDC2 circuit.}
\label{fig:rldc2-sch}
\end{figure}
\begin{figure}[!ht]
  \begin{equation}
    \begin{array}{rclc}
      0 &=& i_1 + i_2 + j_1 + j_2 & (K_1) \\
      x_1 + w_1 &=& u_1 + v_1 & (K_2) \\
      u_1 + v_1 &=& u_2 + v_2 & (K_3) \\
      u_2 + v_2 &=& x_2 + w_2 & (K_4) \\
      w_1 &=& L_1 \cdot j'_1 & (L_1) \\
      w_2 &=& L_2 \cdot j'_2 & (L_2) \\
      i_1 &=& C_1 \cdot v'_1 & (C_1) \\
      i_2 &=& C_2 \cdot v'_2 & (C_2) \\
      x_1 &=& R_1 \cdot j_1 & (R_1) \\
      x_2 &=& R_2 \cdot j_2 & (R_2) \\
      s_1 &=& \text{if } \guard_1 \text{ then } i_1 \text{ else } -u_1 & (S_1) \\
      s_2 &=& \text{if } \guard_2 \text{ then } i_2 \text{ else } -u_2 & (S_2) \\
      0 &=& \text{if } \guard_1 \text{ then } u_1 \text{ else } i_1 & (Z_1) \\
      0 &=& \text{if } \guard_2 \text{ then } u_2 \text{ else } i_2 & (Z_2) \\
      \guard_1 &=& (s_1^- \geq 0) & (\guard_1^-) \\
      \guard_2 &=& (s_2^- \geq 0) & (\guard_2^-)
    \end{array}
    \label{sys:rldc2-system}
  \end{equation}
  \caption{The RLDC2 model (\emph{n.b.}: variable $y'$ denotes the time derivative of $y$).}\label{fig:rldc2}
\end{figure}
\begin{figure*}[!htp]
    \begin{center}
    \includegraphics[width=0.8\linewidth]{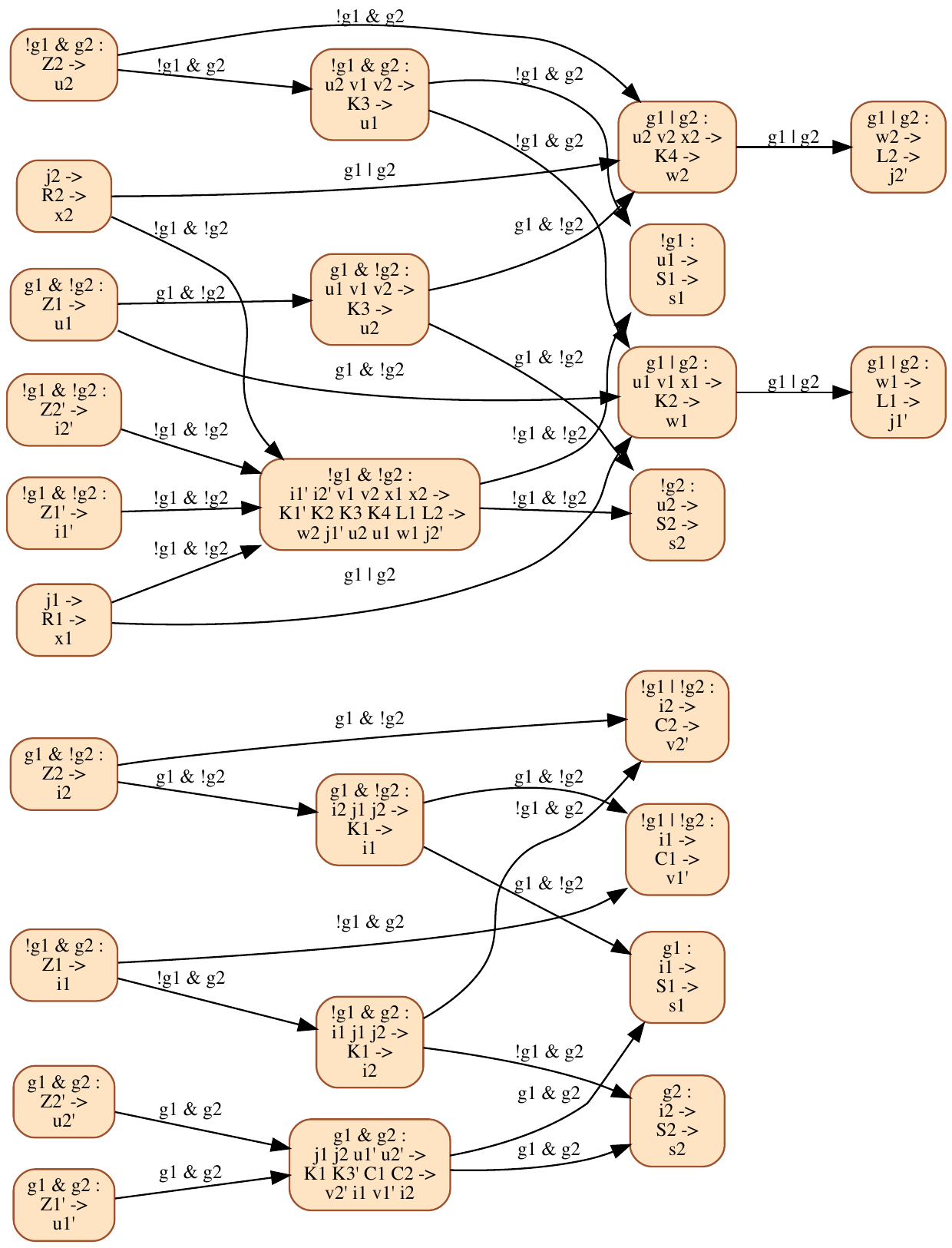}
    \end{center}
    \caption{Block dependency graph of the RLDC2 model, generated
    by IsamDAE. Vertices are labeled
    $p : Y \rightarrow \block \rightarrow X$, where: $p$
    is a propositional formula defining in which modes the block is
    evaluated; $Y$ is the set of free variables for reading; $\block$ is the set of
    equations of the block; $X$ is the set of dependent variables for writing. Edges
    are labeled by a propositional formula, defining in which modes the
    dependency applies---notation ``!g'' means ``not g'', ``g1 \& g2'' is the conjunction, and ``g1 $|$ g2'' is the disjunction.}\label{fig:rldc2cdg}
  \end{figure*}

\section{Conclusion}
This report complements references~\cite{BENVENISTE2020,DBLP:journals/corr/abs-2008-05166} by collecting and providing all missing details related to structural analysis, for both algebraic and DAE systems, in a systematic way. 
\clearpage
\bibliographystyle{abbrv}
\bibliography{hybrid,modelica,synchrone,lucy}

\end{document}